\DeclareMathAlphabet\mathbfcal{OMS}{cmsy}{b}{n}
\newcommand{\calA}{\mathbfcal{A}}
\newcommand{\R}{\mathbb{R}}
\newcommand{\hb}[1]{\hat{\mathbf{#1}}}
 \definecolor{green}{rgb}{0.84, 1, 0.89}
 \definecolor{red}{rgb}{1, 0.70, 0.70}
 \definecolor{grey}{rgb}{0.8, 0.8, 0.8}
 \definecolor{blue}{rgb}{0.69, 0.8, 1}
 \definecolor{white}{rgb}{1, 1, 1}
 \newcommand\cincludegraphics[2][]{\raisebox{-0.45\height}{\includegraphics[#1]{#2}}}
\begin{document}

\title{A Simple and Powerful Framework for\\ Stable Dynamic Network Embedding}

\author{\name Ed Davis \email edward.davis@bristol.ac.uk \\
   \addr School of Mathematics \\
       University of Bristol\\
       Bristol, United Kingdom
       \AND
       \name Ian Gallagher \email ian.gallagher@bristol.ac.uk \\
       \addr School of Mathematics  \\
       University of Bristol \\
       Bristol, United Kingdom \\
       \name Daniel John Lawson \email dan.lawson@bristol.ac.uk \\
       \addr School of Mathematics  \\
       University of Bristol \\
       Bristol, United Kingdom \\
       \name Patrick Rubin-Delanchy \email patrick.rubin-delanchy@bristol.ac.uk \\
       \addr School of Mathematics  \\
       University of Bristol \\
       Bristol, United Kingdom}

\editor{}

\maketitle

\begin{abstract}
In this paper, we address the problem of dynamic network embedding, that is, representing the nodes of a dynamic network as evolving vectors within a low-dimensional space. While the field of static network embedding is wide and established, the field of dynamic network embedding is comparatively in its infancy. We propose that a wide class of established static network embedding methods can be used to produce interpretable and powerful dynamic network embeddings when they are applied to the dilated unfolded adjacency matrix. We provide a theoretical guarantee that, regardless of embedding dimension, these \textit{unfolded} methods will produce stable embeddings, meaning that nodes with identical latent behaviour will be exchangeable, regardless of their position in time or space. 

We additionally define a hypothesis testing framework which can be used to evaluate the quality of a dynamic network embedding by testing for planted structure in simulated networks. Using this, we demonstrate that, even in trivial cases, unstable methods are often either conservative or encode incorrect structure. In contrast, we demonstrate that our suite of stable unfolded methods are not only more interpretable but also more powerful in comparison to their unstable counterparts. 

\end{abstract}

\begin{keywords}
dynamic network embedding, graph representation learning, hypothesis testing, dynamic networks, temporal clustering.
\end{keywords}

\pagebreak

\section{Introduction}
Network embedding is the process of finding a $d$-dimensional vector representation for each node in a network, upon which downstream tasks such as regression or clustering can be used to extract information about the network. This is a thriving field of research with wide applications from natural language \citep{word2vec} to neuroscience \citep{brain_embedding_node2vec}. While there has been a significant body of research behind the embedding of static networks, we argue that the field of dynamic network embedding is still in its infancy, despite offering an even wider range of potential applications. Due to the maturity of the static network embedding literature, a tantalising goal would be to find a way of adapting established static network embeddings such that they perform well on dynamic network embedding tasks. 

We consider the problem of embedding discrete-time dynamic networks, i.e. those that can be represented as a series of symmetric (or undirected) adjacency matrix ``snapshots" over time, $\mathbf{A}^{(1)}, \dots, \mathbf{A}^{(T)} \in \{0,1\}^{n \times n}$. A dynamic network embedding is then the process of finding a representation $\hb{Y}_i^{(t)} \in \R^{d}$ for each node $i \in \{1, \dots, n\}$ in each snapshot $t \in \{1, \dots, T\}$. We will refer to each $\hb{Y}^{(1)}, \dots, \hb{Y}^{(T)} \in \R^{n \times d}$ as embedding time points. 

 In a recent paper, \citet{dynamic_embedding} showed that unfolded adjacency spectral embedding (UASE) \citep{mrdpg} provides a theoretically grounded procedure for obtaining a dynamic network embedding which is stable in both time and space (these details and others from the introduction are formally defined below). One implication of the stability guarantee is that dynamic community structure can be recovered consistently using spatio-temporal clustering by computing a single spectral embedding on a column-concatenation of adjacency snapshots, $\mathbfcal{A}  = \left( \mathbf{A}^{(1)}, \dots, \mathbf{A}^{(T)} \right)$, referred to as the unfolded adjacency matrix. This is the only method to our knowledge that can offer such guarantees. However, the authors leave the question open of how to achieve such guarantees with methods other than adjacency spectral embedding. For example, regularised Laplacian spectral embedding can offer better results in sparse regimes in comparison to unregularised spectral embedding \citep{rohe_rlse} and node2vec \citep{node2vec} is popular and empirically successful. Generalisation is a complicated problem however, as \citet{dynamic_embedding} proved the stability of UASE using its central limit theorem, something that cannot be done for a general class of methods. Up to now, the choice has been to either use UASE to consistently capture spatial and temporal information while being limited to adjacency spectral embedding, or to use other embedding methods at the cost of temporal information. In this paper, we extend these stability guarantees to encompass a wide class of established network embedding methods, circumventing the need for a central limit theorem. 

The concepts of cross-sectional and longitudinal stability for dynamic embeddings (which we refer to as spatial and temporal stability), defined by \citet{dynamic_embedding}, are hugely important. A stable dynamic embedding is one where the ordering of input snapshots does not change the probabilistic outcome of the embedding (up to permutation), and where any two nodes with the same latent behaviour are exchangeable, even if they are observed at different times. We build on the work of \citet{dynamic_embedding} by extending these definitions of stability to encompass a wider range of dynamic embedding methods, including those which are non-deterministic. This extension is critical for addressing the open question of how to extend the stability guarantees of UASE to a wider range of methods. 

Our main theorem then provides a theoretical guarantee that a wide class of static network embedding methods will produce a stable dynamic embedding on the input of the dilated unfolded adjacency matrix. 
From this, the jump from static network to dynamic network embedding is straightforward. Further, this guarantee holds for \emph{any} embedding dimension $d$, which offers a powerful advantage over unstable methods whose embedding dimension can greatly affect the encoding of temporal information. Under our framework, the dimension $d$ is chosen to gain statistical power (see Section \ref{sec:hypothesis_testing}). 

A popular alternative approach to dynamic embedding has been to assume temporal smoothness, that is, that dynamic networks evolve smoothly and therefore that subsequent snapshots are similar. Under this assumption, dynamic embedding methods have been proposed that recurrently align single network embeddings, for example by Procrustes alignment rotation \citep{procrustes_dynsem, procrustes_tNodeEmbed} or by initialising the training of node vectors at time \(t\) with the node vectors at \(t-1\) \citep{dynnode2vec, dyntriad, goyal2018dyngem, glodyne}. However, these alignments are only based on subsequent time points, meaning that the embedding time point $\hb{Y}^{(u)}$ is not likely to be related to $\hb{Y}^{(t)}$ when $u \ll t$. This means that long-range temporal structure is lost and stability cannot be guaranteed. Further, the smoothness assumption may not hold. In contrast, unfolding stands out for its theoretical guarantees and simplicity.

To formalise a testable notion of stability, we introduce a hypothesis test that exploits exchangeability over time and/or space and therefore can be used to quantify the stability of any dynamic embedding. We show that in trivial settings, dynamic embedding methods without stability often encode incorrect spatial and temporal relations. Surprisingly, even in settings when snapshots are independently and identically distributed (the simplest possible dynamic network), unstable methods produce embeddings which are, at best, conservative under this hypothesis test. We demonstrate that this does not occur with stable methods. Due to these findings, we recommend that unstable dynamic embedding methods should be replaced by stable versions with more interpretable meaning; something that we show can be done with only a few lines of code. 

As a demonstration of practical power, we consider a dynamic flight network from 2019 to 2021. We show that dynamic network embeddings based on the dilated unfolded adjacency matrix can encode the seasonal periodic behaviour present in the flight network, as well as the two major European waves of the COVID-19 pandemic as it disrupted air traffic; properties that could not be encoded using unstable methods.
 
The structure of the paper is as follows. Section \ref{sec:motivation} motivates the use of stable dynamic embedding methods with a trivial example. Section \ref{sec:general_unfolded_embedding} introduces the problem of dynamic network embedding, introduces definitions of stability and proves our main theorem. In Section \ref{sec:hypothesis_testing}, we introduce a hypothesis test and use it to compare the stability of various dynamic embedding methods on simulated networks. In Section \ref{sec:flight_analysis}, we compare our stable methods against existing dynamic network embedding methods on a real-world dynamic flight network. Finally, Section \ref{sec:conclusion} summarises our findings and recommends the use of stable dynamic embedding methods.

\section{Motivation}\label{sec:motivation}

Let us define a trivial problem. Suppose that we have a network with two groups of nodes that have similar behaviour, which we call node communities. Suppose also that as this network evolves over time, one community changes behaviour, while the other remains static. Then, we expect that the dynamic embedding of this network should encode both the static and changing behaviour demonstrated by the respective communities. Despite the simplicity of this problem, there are almost no dynamic embedding methods that can solve it. 

Figure \ref{fig:motivation_examples} displays the embeddings from spectral- and skip-gram-based methods with the use of three different temporal mechanisms. The leftmost column of Figure \ref{fig:motivation_examples} demonstrates the na\"ive approach to this problem, where static network embedding methods have been independently applied to each snapshot. This independent application places each embedding time point in a different space, thereby making comparisons over time impossible and the temporal structure present in the network is lost. This is illustrated by the significant movement of the (supposedly) stable community.

The middle column of Figure \ref{fig:motivation_examples} shows dynamic embeddings based on the temporal smoothness assumption which act to align subsequent time points. While they mark an improvement over independent embeddings, they still fail to encode the trivial system; both independent spectral embedding (ISE) with Procrustes alignment and GloDyNE \citep{glodyne} reduce the movement of the stable community compared to the independent methods, but they still incorrectly encode some movement. The fact that state-of-the-art methods, like GloDyNE, cannot solve such a simple problem leads us to claim that the field of dynamic embedding is very much in its infancy.

We contrast these unstable methods with two stable methods, which have guarantees to solve this problem. The top right of Figure \ref{fig:motivation_examples} shows the aforementioned UASE \citep{dynamic_embedding}. Then, as a demonstration of our main theorem and to contrast with other skip-gram methods, we define a stable dynamic version of the popular node2vec embedding \citep{node2vec}, which we call unfolded node2vec. This is displayed in the bottom right of Figure \ref{fig:motivation_examples}. Both stable methods correctly encode the stability of the static community, and the movement of the other. Note that the difference between independent node2vec and unfolded node2vec is only correctly structuring the input matrix and interpretation of the output. 

\begin{figure}
    \centering
    \includegraphics[width=\linewidth]{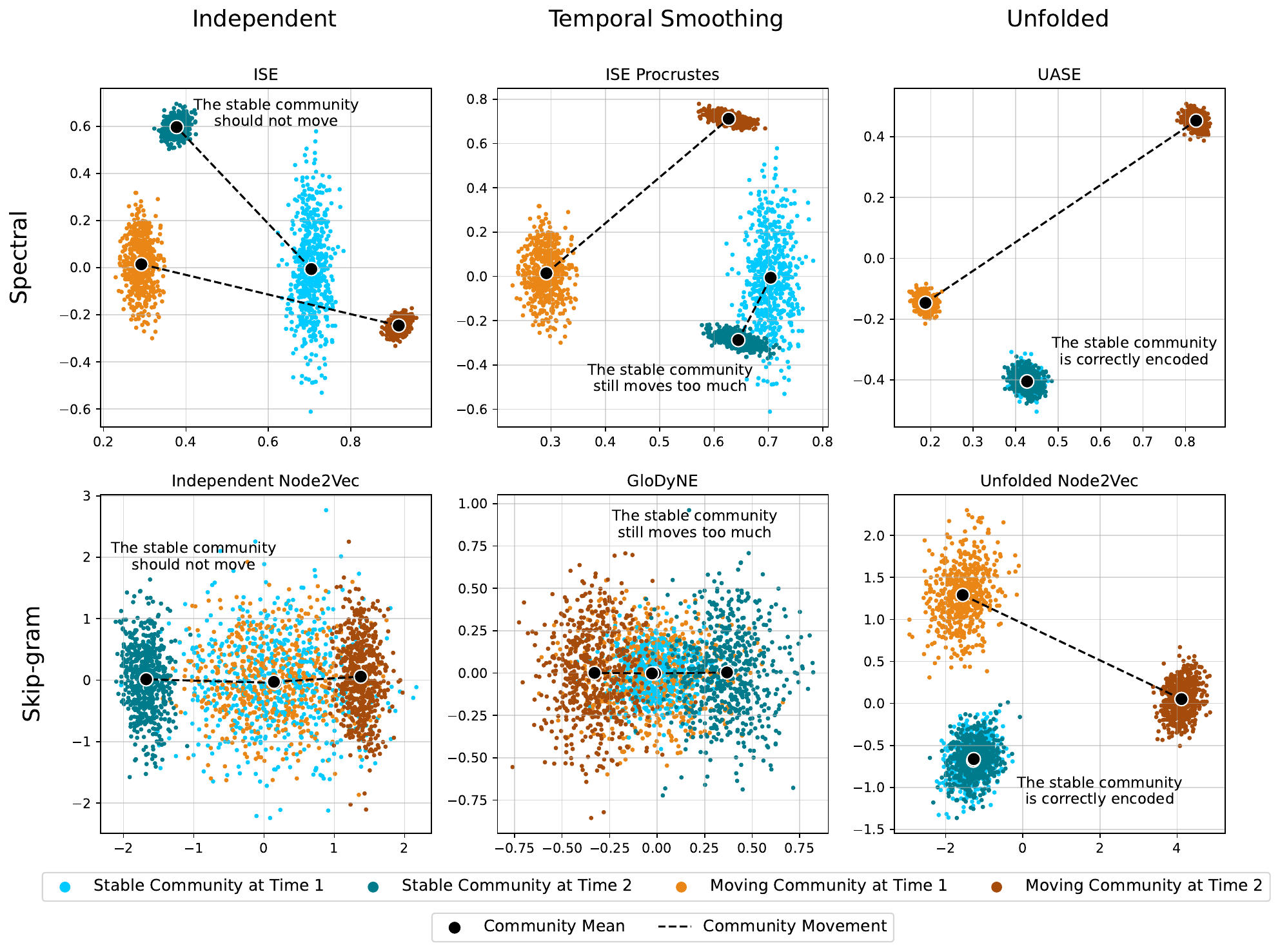}
    \caption{Example dynamic network embeddings from various methods. Points are coloured based on their community membership and time point. Dashed lines illustrate the movement of communities. The subplot row indicates the embedding method type and the subplot column indicates the temporal mechanism used. The embedding of the stable community at time 1 should be exchangeable with its embedding at time 2, as achieved by the unfolded methods. For visualisation, the skip-gram methods are embedded into three dimensions and then PCA is used to embed down to two.}
    \label{fig:motivation_examples}
\end{figure}

\section{Dilated Unfolded Embedding}\label{sec:general_unfolded_embedding}

In this section, we will set up the problem of dynamic network embedding, explain what makes a dynamic network embedding stable, introduce the dilated unfolded adjacency matrix, and prove our main theorem.

\subsection{Dynamic Network Embedding Setup}\label{sec:setup}
An \(n\) node network is represented using an adjacency matrix, \(\mathbf{A} \in \left\{ 0,1 \right\}^{n \times n}\), where \(\mathbf{A}_{ij} = 1\) if there exists an edge between nodes \(i\) and \(j\) and \(\mathbf{A}_{ij} = 0\) otherwise. In this paper, we consider the discrete-time dynamic network setting in which we represent an \(n\) node dynamic undirected network as a series of $T$ symmetric adjacency matrix ``snapshots'' over time, \(\mathbf{A}^{(1)}, \dots, \mathbf{A}^{(T)} \in \left\{ 0,1 \right\}^{n \times n}\). Here, we assume that each $\mathbf{A}^{(t)}$ is an inhomogeneous random graph, $\mathbf{A}^{(t)}_{ij} \overset{\text{ind}}{\sim} \text{Bernoulli} \left(\mathbf{P}_{ij}^{(t)}\right)$, where $\mathbf{P}^{(t)} \in [0,1]^{n \times n}$ is a fixed matrix giving the edge probability for each pair of nodes in the network at time $t$. In contrast to related works \citep{mrdpg, dynamic_embedding}, we do not assume that $\mathbf{P}^{(t)}$ is low-rank under this framework. As a consequence of this relaxation, we have no clear notion of a ``correct'' $d$, and can select $d$ by preference, e.g. for interpretation (with $d=2$), to gain statistical power (discussed in section \ref{sec:hypothesis_testing}), or for computational constraints. Then, our goal is to find a $d$-dimensional embedding for each snapshot, \(\hb{Y}^{(1)}, \dots, \hb{Y}^{(T)} \in \R^{n \times d}\), with $d \leq n$.

An additional challenge that comes with dynamic network embedding over the static network case is to ensure that information is well-represented across different embedding time points. Specifically, we would like to ensure that if any two nodes $i$ and $j$ have the same edge probability vectors at different times $t$ and $u$, that is, $\mathbf{P}_i^{(t)} = \mathbf{P}_j^{(u)}$, then their embedded positions should be exchangeable. To do this, building on the work of \citet{dynamic_embedding}, we introduce the concepts of spatial and temporal stability. We assume that for a fixed $n$ only a countable number of embeddings are possible, so that the embedding is a random object with a discrete distribution. We make this assumption only to help presentation---for continuous-valued embeddings, the probabilities below can be changed to densities.

\begin{definition}[Spatial stability]\label{def:spatial}
A dynamic embedding, $\hb{Y}^{(1)}, \dots, \hb{Y}^{(T)}$, is spatially stable if when $\mathbf{P}_i^{(t)} = \mathbf{P}_j^{(t)}$ the positions $\hb{Y}_{i}^{(t)}$ and $\hb{Y}_{j}^{(t)}$ are exchangeable, 
\begin{equation*}
        \mathbb{P} \left( \hat{\mathbf{Y}}^{(t)}_i = v_{1}, \hat{\mathbf{Y}}^{(t)}_j = v_2  \right) = \mathbb{P} \left( \hat{\mathbf{Y}}^{(t)}_i = v_{2}, \hat{\mathbf{Y}}^{(t)}_j = v_1 \right),
\end{equation*}
for any $t \in \{1, \dots, T\}$ and any $i,j \in \{1, \dots, n\}$.
\end{definition}

\begin{definition}[Temporal stability]\label{def:temporal}
A dynamic embedding, $\hb{Y}^{(1)}, \dots, \hb{Y}^{(T)}$, is temporally stable if when $\mathbf{P}_i^{(t)} = \mathbf{P}_i^{(u)}$ the positions $\hb{Y}_{i}^{(u)}$ and $\hb{Y}_{i}^{(t)}$ are exchangeable, 
\begin{equation*}
        \mathbb{P} \left( \hat{\mathbf{Y}}^{(u)}_i = v_{1}, \hat{\mathbf{Y}}^{(t)}_i = v_2  \right) = \mathbb{P} \left( \hat{\mathbf{Y}}^{(u)}_i = v_{2}, \hat{\mathbf{Y}}^{(t)}_i = v_1  \right),
\end{equation*}
for any $u, t \in \{1, \dots, T\}$ and any $i \in \{1, \dots, n\}$.
\end{definition}
The only existing dynamic network embedding procedure we know to be spatially and temporally stable is unfolded adjacency spectral embedding (UASE) \citep{mrdpg, dynamic_embedding}. UASE constructs a time-invariant \emph{anchor} embedding,
\begin{equation*}
\hb{X}_{\calA} = \hb{U}_{\calA} \hb{\Sigma}_{\calA}^{1/2} \in \R^{n \times d},
\end{equation*}
and a time-varying \emph{dynamic} embedding,
\begin{equation*}
\hb{Y}_{\calA} = \hb{V}_{\calA} \hb{\Sigma}_{\calA}^{1/2} \in \R^{nT \times d},
\end{equation*}
where $\hb{U}_{\mathbfcal{A}} \hb{\Sigma}_{\mathbfcal{A}} \hb{V}_{\mathbfcal{A}}^\top$ is a rank-$d$ (truncated) singular value decomposition (SVD) of the unfolded adjacency matrix, $\mathbfcal{A} = \left(\mathbf{A}^{(1)}, \dots, \mathbf{A}^{(T)}\right) \in \{0,1\}^{n \times nT}$ (column concatenation). $\hb{Y}_\mathbfcal{A}$ can then be split into $T$ $\R^{n \times d}$ dynamic embedding time points as $\hb{Y}_\mathbfcal{A} = \left( \hb{Y}_\mathbfcal{A}^{(1)} ; \dots ; \hb{Y}_\mathbfcal{A}^{(T)} \right)$ (row concatenation).

\subsection{Dilated Unfolded Embedding}
We introduce the \textit{dilated unfolded adjacency matrix}, which is the symmetric dilation of the unfolded adjacency matrix,
\begin{equation}\label{eq:dilated_unfolded_matrix}
\mathbf{A} = \begin{bmatrix} \mathbf{0} & \mathbfcal{A} \\ \mathbfcal{A}^{\top} & \mathbf{0} \end{bmatrix} \in \{0, 1\}^{(n+nT) \times (n+nT)}.
\end{equation}
Almost all network embedding algorithms have some random element, e.g., in the choice of eigenvectors when the corresponding eigenvalues are equal. To cope with this, we define a network embedding as a function $\mathcal{F}: \{\mathbf{a} \in \{0,1\}^{m \times m} : \mathbf{a}= \mathbf{a}^\top\} \times \Omega \rightarrow \R^{m \times d}$ which takes an adjacency matrix $\mathbf{a}$ and a seed $\omega \in \Omega$ as input and embeds the $i$th row of $\mathbf{a}$ to the $i$th row of $\mathcal{F}(\mathbf{a},\omega)$. We will use the shorthand $\mathcal{F}(\mathbf{A})$ to denote the embedding obtained with a random seed $W$, that is, $\mathcal{F}(\mathbf{A}):= \mathcal{F}(\mathbf{A}, W)$. We let
\begin{equation}\label{eq:F(A)}
    \begin{bmatrix}
        \hb{X} \\ \hb{Y}
    \end{bmatrix} = \mathcal{F}(\mathbf{A}),
\end{equation}
where $\hb{X}\in \R^{n \times d}$, and $\hb{Y} \in \R^{nT \times d}$, defining \emph{anchor} and \emph{dynamic} embeddings respectively, corresponding to $\mathcal{F}$. $\hb{Y}$ is split into $T$ $\R^{n \times d}$ embedding time points,  $\hb{Y} = \left( \hb{Y}^{(1)} ; \dots ; \hb{Y}^{(T)} \right)$.

\begin{proposition}\label{prop:gen_unf}
If $\mathcal{F}$ is a rank-$2d$ adjacency spectral embedding then $\hb{X}$ and $\hb{Y}$ contain scaled rank-$d$ anchor and dynamic UASEs, that is,
\begin{equation*}
\begin{bmatrix}
        \hb{X} \\ \hb{Y}
    \end{bmatrix} = \mathcal{F}(\mathbf{A}) = \hb{U} | \hb{\Lambda} |^{1/2} = \frac{1}{\sqrt{2}} \begin{bmatrix} \hb{X}_{\calA} & \hb{X}_{\calA} \\ \hb{Y}_{\calA} & -\hb{Y}_{\calA} \end{bmatrix},
\end{equation*}
where \(\hb{\Lambda} \in \R^{2d \times 2d}\) is a diagonal matrix containing the $2d$ largest eigenvalues of $\mathbf{A}$ in terms of magnitude in appropriate order and \(\hb{U} \in \R^{(n+nT) \times 2d}\) is a matrix of appropriately chosen orthonormal eigenvectors (as columns).
\end{proposition}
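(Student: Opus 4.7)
The plan is to use the classical correspondence between the SVD of $\mathbfcal{A}$ and the eigendecomposition of its symmetric dilation $\mathbf{A}$, reading off the top-$2d$ eigenpairs of $\mathbf{A}$ explicitly in terms of the top-$d$ singular triples of $\mathbfcal{A}$, and then computing $\hb{U}|\hb{\Lambda}|^{1/2}$ by direct block multiplication.

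The key step is a one-line verification. Let $\mathbfcal{A} = \hb{U}_{\calA} \hb{\Sigma}_{\calA} \hb{V}_{\calA}^\top$ be the rank-$d$ truncated SVD, so that $\hb{X}_{\calA} = \hb{U}_{\calA}\hb{\Sigma}_{\calA}^{1/2}$ and $\hb{Y}_{\calA} = \hb{V}_{\calA}\hb{\Sigma}_{\calA}^{1/2}$. For any singular triple $(\sigma,u,v)$ of $\mathbfcal{A}$ (so that $\mathbfcal{A} v = \sigma u$ and $\mathbfcal{A}^\top u = \sigma v$), block multiplication gives
\begin{equation*}
\mathbf{A} \begin{bmatrix} u \\ \pm v \end{bmatrix} = \begin{bmatrix} \pm \mathbfcal{A} v \\ \mathbfcal{A}^\top u \end{bmatrix} = \pm\sigma \begin{bmatrix} u \\ \pm v \end{bmatrix},
\end{equation*}
so that $\tfrac{1}{\sqrt{2}}\begin{bmatrix} u \\ \pm v \end{bmatrix}$ is a unit eigenvector of $\mathbf{A}$ with eigenvalue $\pm\sigma$; the $\sqrt{2}$ normalisation comes from $\|u\|=\|v\|=1$. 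Because every nonzero eigenvalue of $\mathbf{A}$ arises this way in $\pm$-pairs, the $2d$ eigenvalues of $\mathbf{A}$ of largest magnitude are exactly $\pm\sigma_1,\ldots,\pm\sigma_d$, the top-$d$ singular values of $\mathbfcal{A}$ with both signs.

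I would then assemble these eigenpairs into the spectral-embedding form. Choosing
\begin{equation*}
\hb{\Lambda} = \begin{bmatrix} \hb{\Sigma}_{\calA} & \mathbf{0} \\ \mathbf{0} & -\hb{\Sigma}_{\calA} \end{bmatrix}, \qquad \hb{U} = \frac{1}{\sqrt{2}}\begin{bmatrix} \hb{U}_{\calA} & \hb{U}_{\calA} \\ \hb{V}_{\calA} & -\hb{V}_{\calA} \end{bmatrix},
\end{equation*}
which has orthonormal columns by the orthonormality of $\hb{U}_{\calA}$ and $\hb{V}_{\calA}$ together with the $\sqrt{2}$ normalisation, and noting $|\hb{\Lambda}|^{1/2}$ is block-diagonal with two copies of $\hb{\Sigma}_{\calA}^{1/2}$, the product $\hb{U}|\hb{\Lambda}|^{1/2}$ multiplies out immediately to
\begin{equation*}
\hb{U}|\hb{\Lambda}|^{1/2} = \frac{1}{\sqrt{2}}\begin{bmatrix} \hb{X}_{\calA} & \hb{X}_{\calA} \\ \hb{Y}_{\calA} & -\hb{Y}_{\calA} \end{bmatrix}.
\end{equation*}
Splitting into the first $n$ and last $nT$ rows then identifies $\hb{X}$ and $\hb{Y}$ as in (\ref{eq:F(A)}).

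The only subtlety, rather than a genuine obstacle, is spectral degeneracy: repeated singular values of $\mathbfcal{A}$ (or a deficient rank) leave the choice of eigenvectors of $\mathbf{A}$ ambiguous within each eigenspace. This is precisely the ambiguity the random seed $W$ encapsulates in the definition of $\mathcal{F}$, and since the proposition only asks for one \emph{appropriately chosen} $\hb{U}$, it suffices to exhibit the block form above as a valid selection within each eigenspace, which is always possible.
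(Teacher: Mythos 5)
Your proposal is correct and follows essentially the same route as the paper: both rest on the standard correspondence between the top-$d$ singular triples of $\mathbfcal{A}$ and the $\pm\sigma_i$ eigenpairs $\tfrac{1}{\sqrt{2}}(u_i,\pm v_i)$ of its symmetric dilation, followed by the same block computation of $\hb{U}|\hb{\Lambda}|^{1/2}$. The only difference is that you verify the dilation fact by direct block multiplication (and explicitly note the degeneracy caveat) where the paper simply cites Horn and Johnson, which is a matter of presentation rather than substance.
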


\subsection{Provable Stability of Embedding the Dilated Unfolded Adjacency Matrix}
We introduce the important but very mild assumption that $\mathcal{F}$ is a \textit{label-invariant network embedding}, that is, that $\mathcal{F}$ is invariant to the ordering of the nodes in its input adjacency matrix.

\begin{definition}[Label-invariant network embedding]\label{def:label_inv_embedding}
     A label-invariant network embedding satisfies
    \begin{equation*}
    \mathbb{P} \left(\mathcal{F}(\mathbf{a},W) = \mathbf{v}\right) = \mathbb{P} \left(\mathcal{F}(\mathbf{\Pi} \mathbf{a} \mathbf{\Pi}^\top,W) = \mathbf{\Pi} \mathbf{v} \right),
    \end{equation*}
     for any $\mathbf{a} \in \{0,1\}^{m \times m}$, $\mathbf{v} \in \R^{m \times d}$, and permutation matrix $\mathbf{\Pi} \in \{0,1\}^{m \times m}$.

\end{definition}

\begin{theorem}\label{main_theorem}
If $\mathcal{F}$ is label-invariant, then the dynamic embedding $\mathbf{\hat Y}$ obtained from $\mathcal{F}(\mathbf{A})$ is spatially and temporally stable.
\end{theorem}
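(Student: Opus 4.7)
My approach is to reduce each stability claim to a distributional symmetry of $\mathbf{A}$ and then push it through $\mathcal{F}$ using label-invariance. The preliminary ``transfer'' step is that applying Definition~\ref{def:label_inv_embedding} pointwise to a deterministic input and averaging over $\mathbf{A}$ (independent of the seed $W$) gives $\mathcal{F}(\Pi\mathbf{A}\Pi^\top) \stackrel{d}{=} \Pi\mathcal{F}(\mathbf{A})$. Combined with any permutation $\Pi$ satisfying $\Pi\mathbf{A}\Pi^\top \stackrel{d}{=} \mathbf{A}$, this yields $\Pi\mathcal{F}(\mathbf{A}) \stackrel{d}{=} \mathcal{F}(\mathbf{A})$; projecting onto the pair of rows that $\Pi$ swaps then delivers the required exchangeability.

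For spatial stability under $\mathbf{P}^{(t)}_i = \mathbf{P}^{(t)}_j$, I would take $\Pi_{\mathrm{sp}}$ to be the ``node swap'' that simultaneously exchanges anchor $i \leftrightarrow j$ and, at every time $s$, dynamic $(i,s) \leftrightarrow (j,s)$. Conjugating $\mathbf{A}$ replaces each snapshot block by $\sigma \mathbf{A}^{(s)} \sigma^\top$, where $\sigma$ is the $n\times n$ swap of $i$ and $j$; this keeps each block symmetric and Bernoulli. The hypothesis, combined with the symmetry of each $\mathbf{P}^{(s)}$, yields the mean-level equality $\sigma \mathbf{P}^{(s)} \sigma^\top = \mathbf{P}^{(s)}$ needed to conclude $\Pi_{\mathrm{sp}} \mathbf{A} \Pi_{\mathrm{sp}}^\top \stackrel{d}{=} \mathbf{A}$. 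Reading off the rows indexed by $(i,t)$ and $(j,t)$ from the resulting identity $\Pi_{\mathrm{sp}}\mathcal{F}(\mathbf{A}) \stackrel{d}{=} \mathcal{F}(\mathbf{A})$ then gives $(\hat{\mathbf Y}^{(t)}_i,\hat{\mathbf Y}^{(t)}_j) \stackrel{d}{=} (\hat{\mathbf Y}^{(t)}_j,\hat{\mathbf Y}^{(t)}_i)$.

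For temporal stability under $\mathbf{P}^{(t)}_i = \mathbf{P}^{(u)}_i$, I would use $\Pi_{\mathrm{tem}}$ that fixes every anchor and swaps $(k,t) \leftrightarrow (k,u)$ for every $k$. Conjugation interchanges the $t$-th and $u$-th blocks of $\mathbfcal{A}$ while preserving block symmetry everywhere, so under the hypothesis — upgraded to identically distributed snapshots at the two times — we get $\Pi_{\mathrm{tem}}\mathbf{A}\Pi_{\mathrm{tem}}^\top \stackrel{d}{=} \mathbf{A}$, and the same extraction swaps $\hat{\mathbf Y}^{(t)}_i$ with $\hat{\mathbf Y}^{(u)}_i$ in distribution.

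The hard step is securing $\Pi \mathbf{A} \Pi^\top \stackrel{d}{=} \mathbf{A}$ at the full-matrix level. Because the dilation stitches every snapshot to a common anchor block, a permutation that touches a single time slice in isolation generically destroys that slice's block-symmetry and pushes $\Pi \mathbf{A}\Pi^\top$ off the support of $\mathbf{A}$; this is why the two permutations above must act globally. The delicate bookkeeping is then verifying that, under the stated hypothesis, every one of the $T$ resulting $n \times n$ blocks is still a legitimate symmetric Bernoulli matrix with the original edge probabilities — equivalently, that the condition propagates consistently across all slices touched by $\Pi$. Once this is done, the rest of the proof is a mechanical application of Definition~\ref{def:label_inv_embedding}.
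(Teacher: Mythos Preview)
There is a genuine gap. Both permutations you propose require strictly stronger hypotheses than Definitions~\ref{def:spatial} and~\ref{def:temporal} supply, so neither delivers $\Pi\mathbf{A}\Pi^\top \stackrel{d}{=} \mathbf{A}$. For spatial stability, your $\Pi_{\mathrm{sp}}$ swaps $i\leftrightarrow j$ in the anchor block \emph{and at every time $s$}, replacing each snapshot by $\sigma\mathbf{A}^{(s)}\sigma^\top$; the needed equality $\sigma\mathbf{P}^{(s)}\sigma^\top=\mathbf{P}^{(s)}$ must therefore hold for every $s$, whereas the hypothesis $\mathbf{P}^{(t)}_i=\mathbf{P}^{(t)}_j$ gives it only at the single time $s=t$. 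Your sentence asserting that the hypothesis ``yields the mean-level equality $\sigma\mathbf{P}^{(s)}\sigma^\top=\mathbf{P}^{(s)}$'' is simply false for $s\neq t$. For temporal stability you concede the analogous problem yourself: $\Pi_{\mathrm{tem}}$ needs the full snapshot equality $\mathbf{P}^{(t)}=\mathbf{P}^{(u)}$, not the single-row condition $\mathbf{P}^{(t)}_i=\mathbf{P}^{(u)}_i$, and nothing in the setup lets you ``upgrade'' the hypothesis.

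The paper's proof avoids this by using the \emph{minimal} transposition that swaps only the two dynamic indices $(i,u)$ and $(j,t)$, fixing every anchor and every other dynamic node. At the level of $\mathbfcal{A}$ this swaps exactly the two columns whose probability vectors the hypothesis $\mathbf{P}^{(u)}_i=\mathbf{P}^{(t)}_j$ declares equal, and the paper obtains $\mathbb{P}(\mathbf{A}=\mathbf{a})=\mathbb{P}(\mathbf{A}=\Pi\mathbf{a}\Pi^\top)$ from that local information alone, handling the spatial case ($u=t$) and the temporal case ($i=j$) in one stroke. Your block-symmetry worry points at a real subtlety in that step, but enlarging the permutation is not the fix --- as you have discovered, it trades one difficulty for a strictly worse one.
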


\begin{proof}
Let $\mathcal{F}$ be a label-invariant network embedding as defined in Definition \ref{def:label_inv_embedding}. By enforcing that the rows and columns of $\mathbf{A}$ are ordered as stated in Equation \ref{eq:dilated_unfolded_matrix}, and based on the definition of a label-invariant network embedding, 
\begin{equation*}
\begin{bmatrix}
        \hb{X} \\ \hb{Y}
    \end{bmatrix} =\mathcal{F}(\mathbf{A}).
\end{equation*}
Suppose that $\mathbf{P}_i^{(u)} = \mathbf{P}_j^{(t)}$. By the law of total probability, we can express the joint probability of the embedding of node $i$ at time $u$ and node $j$ at time $t$ as,
\begin{equation*}
        \mathbb{P} \left( \hat{\mathbf{Y}}_i^{(u)} = v_{1}, \hat{\mathbf{Y}}_j^{(t)} = v_2 \right) = \sum_{\mathbf{a} \in \mathbb{A}} \mathbb{P} \left( \hat{\mathbf{Y}}^{(u)}_i = v_{1}, \hat{\mathbf{Y}}^{(t)}_j = v_2 \middle | \mathbf{A} = \mathbf{a} \right) \mathbb{P} \left( \mathbf{A} = \mathbf{a} \right), 
\end{equation*}
where $\mathbb{A}$ is a sequence containing exactly one instance of every possible $(n + nT) \times (n + nT)$ dilated unfolded adjacency matrix $\mathbf{A}$ in some order. Let $\mathbf{\Pi}$ be a permutation matrix that permutes node $i$ at time $u$ with node $j$ at time $t$. Then, as we get $\hb{Y}$ from a label-invariant $\mathcal{F}$,
\begin{flalign*}
\sum_{\mathbf{a} \in \mathbb{A}} & \mathbb{P} \left( \hat{\mathbf{Y}}^{(u)}_i = v_{1}, \hat{\mathbf{Y}}^{(t)}_j = v_2 \middle | \mathbf{A} = \mathbf{a} \right) \mathbb{P} \left( \mathbf{A} = \mathbf{a} \right) 
            \\&= \sum_{\mathbf{a} \in \mathbb{A}} \mathbb{P} \left( \hat{\mathbf{Y}}^{(u)}_i = v_{2}, \hat{\mathbf{Y}}^{(t)}_j = v_1 \middle | \mathbf{A} = \mathbf{\Pi a \Pi}^\top \right) \mathbb{P} \left( \mathbf{A} = \mathbf{\Pi a \Pi}^\top \right),
            \\& = \sum_{\mathbf{\Pi} \mathbf{a} \mathbf{\Pi}^\top \in \mathbb{A}} \mathbb{P} \left( \hat{\mathbf{Y}}^{(u)}_i = v_{2}, \hat{\mathbf{Y}}^{(t)}_j = v_1 \middle | \mathbf{A} = \mathbf{a} \right) \mathbb{P} \left( \mathbf{A} = \mathbf{a} \right).
\end{flalign*}
By then removing this permutation, we sum over the same terms in a different order,
\begin{flalign*}
    \sum_{\mathbf{\Pi} \mathbf{a} \mathbf{\Pi}^\top \in \mathbb{A}}  
& \mathbb{P} \left(  \hat{\mathbf{Y}}^{(u)}_i = v_{2}, \hat{\mathbf{Y}}^{(t)}_j = v_1  \middle |  \mathbf{A} = \mathbf{a} \right) \mathbb{P} \left( \mathbf{A} = \mathbf{a} \right)
    \\ & = \sum_{\mathbf{a} \in \mathbb{A} } \mathbb{P} \left( \hat{\mathbf{Y}}^{(u)}_i = v_{2}, \hat{\mathbf{Y}}^{(t)}_j = v_1 \middle | \mathbf{A} = \mathbf{a} \right) \mathbb{P} \left( \mathbf{A} = \mathbf{a} \right), \\
    & = \mathbb{P} \left( \hat{\mathbf{Y}}^{(u)}_i = v_{2}, \hat{\mathbf{Y}}^{(t)}_j = v_1  \right).
\end{flalign*}
This leaves us with
\begin{equation*}\label{eq:exchangeable_distributions}
        \mathbb{P} \left( \hat{\mathbf{Y}}^{(u)}_i = v_{1}, \hat{\mathbf{Y}}^{(t)}_j = v_2  \right) = \mathbb{P} \left( \hat{\mathbf{Y}}^{(u)}_i = v_{2}, \hat{\mathbf{Y}}^{(t)}_j = v_1 \right),
\end{equation*}
which proves that $\hb{Y}$ is spatially and temporally stable.
\end{proof}

\section{A Hypothesis Test for Stability}\label{sec:hypothesis_testing}

Now that we have a way of generating stable dynamic embeddings from a wide class of methods, this section will demonstrate why stability is so important, or rather demonstrate exactly why dynamic embedding methods without stability cannot be interpreted. We will consider a range of trivial simulated networks with planted spatial or temporal changes. We then define a hypothesis test that can determine whether an embedding is encoding the correct structure and quantify how powerful the embedding is (how clearly it encodes changes). We would expect that any reasonable dynamic embedding method would perform well on these very simple systems. If a method were to perform poorly, we argue that when applied to real data, the embedding would not be interpretable.

\subsection{Generating Dynamic Networks}
We use three simple dynamic stochastic block models (DSBMs) \citep{DSBM_1, DSBM_2} to model static, moving and merging node communities over time. The DSBM is a simple dynamic network model where nodes are assigned to one of $K$ communities. Then, the edge probability between any two nodes is determined based on their community memberships. Specifically, we draw our adjacency snapshots as  
\begin{equation*}
    \mathbf{A}_{ij}^{(t)} \sim \text{Bernoulli} \left( \mathbf{B}^{(t)}_{\tau_i, \tau_j} \right),
\end{equation*}
where $\mathbf{B}^{(t)} \in [0, 1]^{K \times K}$ is a matrix of inter-community edge probabilities and $\tau \in \{1, \dots, K\}^{n}$ is the community allocation vector. For example, in the \textit{Moving} system defined in Table \ref{tab:DSBM_systems}, the second community initially has a within-community edge probability of 0.5, which then changes slightly to 0.53 in the second snapshot. The first community has the same within- and inter-community edge probabilities in both snapshots. Therefore, for this system, we have planted a temporal change in the second community, but not the first.

As real-world networks are generally sparse, we additionally define systems in the sparse regime. Following recommendations from \citet[Sec. ~6.2]{power_dist_ref}, we use a Chung Lu model,
\begin{equation*}
    \mathbf{A}_{ij}^{(t)} \sim \text{Bernoulli} \left( \frac{w_i w_j}{\sum_k w_k} \right),
\end{equation*}
where $\mathbf{w} = \left(w_1, \dots, w_n \right)$ is a vector of node weights that follows a power-law distribution. These systems are summarised in Table \ref{tab:DSBM_systems}.

We have designed these systems to be as simple as possible with at most two communities at any given time. Therefore, we would expect a reasonable dynamic embedding method to perform well on them.

\begin{table}
    \centering
    \begin{tabular}{|p{2.4cm}|>{\centering\arraybackslash}m{6.6cm}|p{4.5cm}|}
    \hline
        \ \textbf{System} & \textbf{DSBM Matrix} & 
        \ \textbf{Description} \\
        \hline
        Static & $\mathbf{B}^{(1)} = \begin{bmatrix} 0.5 & 0.5 \\ 0.5 & 0.4 \end{bmatrix} = \mathbf{B}^{(2)}$ & Both communities are static \\ 
        \hline
        Moving & \ $\mathbf{B}^{(1)} = \begin{bmatrix} 0.5 & 0.2 \\ 0.2 & 0.5 \end{bmatrix}, \mathbf{B}^{(2)}= \begin{bmatrix} 0.5 & 0.2 \\ 0.2 & 0.53 \end{bmatrix}$ & Community 2 changes, community 1 is static \\
        \hline
        Merge & \ $\mathbf{B}^{(1)} = \begin{bmatrix} 0.9 & 0.2 \\ 0.2 & 0.1 \end{bmatrix}, \mathbf{B}^{(2)}= \begin{bmatrix} 0.5 & 0.5 \\ 0.5 & 0.5 \end{bmatrix}$ & Community 1 and 2 merge over time \\
        \hline 
        \textbf{System} & \textbf{Probability Matrix} & \ \textbf{Description} \\
        \hline
        Static with Power Dist. & $\mathbf{P}^{(1)}_{ij} = w_i w_j/\sum_k{w_k} = \mathbf{P}^{(2)}_{ij}$ & Static sparse network\\
        \hline
        Moving with Power Dist. & $\mathbf{P}^{(1)}_{ij} = w_i w_j/\sum_k{w_k}$,  $\mathbf{P}^{(2)}_{ij} = 0.97 \left(w_i w_j/\sum_k{w_k} \right)$ & Changing sparse network\\
        \hline
    \end{tabular}
    \caption{Generated dynamic network systems used to test the spatio-temporal properties of a dynamic embedding. In the power example the weights $w_i$ are drawn from a power distribution as described in \citet[Sec. ~6.2]{power_dist_ref}.} 
    \label{tab:DSBM_systems}
\end{table}

\subsection{The Paired Displacement Test}
To work out if a dynamic network embedding has encoded the correct structure, we must compare the embedding distributions either over time or over space, depending on the type of planted change. To evaluate whether embedding distributions have changed, we introduce an exact test. Suppose that we suspect a change in a set of nodes $\mathscr{N}$, with $|\mathscr{N}| \le n$, at some time $t_c$. Here, we use $|\cdot|$ to denote the cardinality of a set. We can test the significance of this change by comparing a window of embedding time points before and after $t_c$. Let $\mathbf{S}^{[1]} = \hat{\mathbf{Y}}^{(t_c - r_1:t_c)}_{\mathscr{N}} \in \mathbb{R}^{|\mathscr{N}| r_1 \times d}$ be the embeddings of nodes $\mathscr{N}$ from time $t_c - r_1$ up to but not including $t_c$, and let $\mathbf{S}^{[2]} = \hat{\mathbf{Y}}^{(t_c:t_c + r_2)}_{\mathscr{N}} \in \mathbb{R}^{|\mathscr{N}| r_2 \times d}$ be the embeddings of $\mathscr{N}$ from time $t_c$ up to but not including $t_c+r_2$.  Our hypotheses are then 
\begin{itemize}
    \item $\mathbf{H}_0$: $\mathbf{S}^{[1]}$ and $\mathbf{S}^{[2]}$ are drawn from the same distribution,
    \item $\mathbf{H}_1$: $\mathbf{S}^{[1]}$ and $\mathbf{S}^{[2]}$ are drawn from different distributions.
\end{itemize}
Here, our null hypothesis $\mathbf{H}_0$ implies that there is no change at $t_c$, whereas the alternative $\mathbf{H}_1$ implies that there is a change at $t_c$. We emphasise that the appropriate use of null hypothesis significance testing is when $\mathbf{H}_0$ is plausible, which is true for simulated data as a test for the validity of embedding procedures.

To test this hypothesis, we define the (temporal) paired displacement test, with test statistic
\begin{equation}\label{eq:vector_displacement_statistic}
     t \left(\mathbf{S}^{[1]}, \mathbf{S}^{[2]} \right) = \left\|  \sum_{i=1}^{|\mathscr{N}|r_1}  \mathbf{S}^{[1]}_i -  \sum_{i=1}^{|\mathscr{N}|r_2}  \mathbf{S}^{[2]}_i \right\|_{2},
\end{equation}
where $\mathbf{S}^{[1]}_{i}$ is the $i$th row of $\mathbf{S}^{[1]}$. This gives the L2 norm of the mean node displacement across the two embedding sets. Our observed test statistic is then $t_{\text{obs}} = t \left(\mathbf{S}^{[1]}, \mathbf{S}^{[2]} \right)$. 

For each node we then propose to swap its position with one of its $r_1 + r_2$ past or future representations with uniform probability, giving us the permuted embedding $\hb{Y}_{\mathscr{N}}'$. We choose this permutation as under $\mathbf{H}_0$ the nodes in $\mathbf{S}^{[1]}$ should be exchangeable with those in $\mathbf{S}^{[2]}$, and so this permutation will have no effect on the test in this case. Using the permuted embedding we can form the permuted embedding matrices, $\mathbf{S}'^{[1]} = \hb{Y}'^{(t_c-r_1:t_c)}_{\mathscr{N}}$ and $\mathbf{S}'^{[2]} = \hb{Y}'^{(t_c:t_c+r_2)}_{\mathscr{N}}$. The permuted test statistic is then $t^\star = t \left(\mathbf{S}'^{[1]}, \mathbf{S}'^{[2]} \right)$. Our p-value is given by how extreme our observed test statistic is against $n_{\text{sim}}$ permuted test statistics,
\begin{equation*}\label{eq:p_value}
    \hat{p} = \frac{1}{n_\text{sim} + 1} \sum_{i=1}^{n_{\text{sim}}+1} \mathbb{I} \left(\mathbf{T}^{\star}_i \geq t_{\text{obs}}\right),
\end{equation*}
where $\mathbf{T}^{\star} \in [0, 1]^{n_{\text{sim}} + 1}$ is a vector of permuted test statistics with $\mathbf{T}^\star_1 = t_{\text{obs}}$ and $\mathbb{I}$ is the indicator function. This process is summarised in algorithm \ref{alg:vector_displacement_test}. 

For spatial testing, we can use the same procedure, but instead we compare two different node sets in a single time point, as opposed to comparing the same node set across time. In this case we set, $\mathbf{S}^{[1]} = \hb{Y}^{(t)}_{\mathscr{N}_1}$ and $\mathbf{S}^{[2]} = \hb{Y}^{(t)}_{\mathscr{N}_2}$, where $\mathscr{N}_1$ and $\mathscr{N}_2$ are two different node sets.

\begin{algorithm}
\caption{Temporal Paired Displacement Test}\label{alg:vector_displacement_test}
\begin{algorithmic}
\State \textbf{Input:}
\State \quad Dynamic network embedding $\hb{Y}^{(1)}, \dots, \hb{Y}^{(T)} \in \R^{n \times d}$
\State \quad Node set $\mathscr{N}$
\State \quad Proposed changepoint $t_c$ and time windows $r_1$ and $r_2$ before and after $t_c$
\State \quad Number of permuted test statistics $n_{\text{sim}}$
\State \textbf{Initialize} $\mathbf{S}^{[1]} = \hb{Y}^{(t_c-r_1:t_c)}$, $\mathbf{S}^{[2]} = \hb{Y}^{(t_c:t_c+r_2)}$
\State Compute $\mathbf{T}^\star_1 = t_{\text{obs}} = t \left(\mathbf{S}^{[1]}, \mathbf{S}^{[2]}\right)$, as defined in equation \ref{eq:vector_displacement_statistic}
\For{$i = 1, \dots, n_{\text{sim}}$ permutations}
    \For{$j=1,\dots, n$ nodes}
    \State Set $\hb{Y}'_j$ to be the shuffled version of $\hb{Y}_j = \left\{\hb{Y}_j^{(t_c-r_1)}, \dots, \hb{Y}_j^{(t_c+r_2)}\right\}$
\EndFor
\State Set $\mathbf{S}'^{[1]} = \hb{Y}'^{(t_c-r_1:t_c)}$, $\mathbf{S}'^{[2]} = \hb{Y}'^{(t_c:t_c+r_2)}$
\State Compute $t^\star = t \left(\mathbf{S}'^{[1]}, \mathbf{S}'^{[2]}\right)$
\State $\mathbf{T}^{\star}_{i+1} = t^{\star}$
\EndFor
\State Compute $\hat{p} = \frac{1}{n_\text{sim}+1} \sum_{i=1}^{n_{\text{sim}}+1} \mathbb{I} \left(\mathbf{T}^{\star}_{i} \geq t_{\text{obs}}\right)$   
\State \textbf{Output:} $\hat{p}$
\end{algorithmic}
\end{algorithm}

\subsection{Methods to Compare}
From Proposition \ref{prop:gen_unf}, we have shown that UASE is a special case of our framework when $\mathcal{F}$ is an adjacency spectral embedding (ASE). However, from Theorem \ref{main_theorem}, stable dynamic embeddings are no longer limited to the use of ASE. Regularised Laplacian spectral embedding (RLSE) is a static network embedding that has been shown to perform well on sparse networks \citep{rohe_rlse}. By setting $\mathcal{F}$ to be RLSE, we can define \emph{unfolded} regularised Laplacian spectral embedding (URLSE),
\begin{equation*}
    \begin{bmatrix} \hb{X} \\ \hb{Y} \end{bmatrix} = \mathcal{F} (\mathbf{A}) = \text{ASE} \left( \mathbf{D}_\gamma^{-1/2} \mathbf{A} \mathbf{D}_\gamma^{-1/2}\right),
\end{equation*}
where $\text{ASE}(\mathbf{a})$ indicates an ASE of an adjacency matrix $\mathbf{a}$, and $\mathbf{D}_\gamma = \mathbf{D} + \gamma \mathbf{I} \in \R^{n \times n}$. Here, $\mathbf{D}$ is a diagonal degree matrix with entries $\mathbf{D}_{ii} = \sum_j \mathbf{A}_{ij}$ and $\gamma \geq 0$ is a regularisation parameter. Following recommendations from \citet{rohe_rlse}, we will set $\gamma$ to be the average node degree. 

We compare URLSE with three other dynamic spectral embedding methods. Independent spectral embedding (ISE) represents the na\"ive case, where each $\hb{Y}^{(t)}$ is an independent adjacency spectral embedding of each $\mathbf{A}^{(t)}$. Note that when using ISE we have ensured that the eigenvector sum is always positive as this makes the method more consistent. We also include a version of ISE where subsequent embedding time points are aligned by a Procrustes rotation. The final spectral method we compare is the omnibus embedding (OMNI) \citep{omni}. This method constructs a single block matrix containing pairwise-averaged adjacency snapshots, $\mathbf{M}_{s,t} = (\mathbf{A}^{(s)} + \mathbf{A}^{(t)})/2$, and then computes an adjacency spectral embedding on this matrix to achieve a dynamic embedding. It has been shown that OMNI is temporally stable, albeit at the cost of spatial stability \citep{dynamic_embedding}.

As a demonstration of the flexibility of this framework, we define an additional unfolded method using a non-spectral and non-deterministic static network embedding. node2vec \citep{node2vec} is a popular static network embedding method which uses random walks to generate node neighbourhoods, then a skip-gram with negative sampling (SGNS) is used to compute an embedding that preserves these node neighbourhoods. Despite the lack of statistical rigour behind node2vec, in comparison to that of spectral embedding, we can still guarantee the stability of unfolded node2vec as static node2vec satisfies our label-invariance condition. Unfolded node2vec is given by
\begin{equation*}
    \begin{bmatrix} \hb{X} \\ \hb{Y} \end{bmatrix} = \mathcal{F} (\mathbf{A}) = \text{node2vec} ( \mathbf{A} ).
\end{equation*}
Note that we do not imply that node2vec is the best choice of $\mathcal{F}$. We instead leave this as an example and invite readers to experiment with different choices of $\mathcal{F}$.

Similar to ISE, we contrast the performance of unfolded node2vec against independent node2vec, where we compute each $\hb{Y}^{(t)}$ using independent static node2vec embeddings. The final dynamic embedding we shall compare is GloDyNE \citep{glodyne}. We choose to compare against this method as, like node2vec, it is based on SGNS, allowing us directly compare the temporal mechanism used by GloDyNE with unfolding. The method aims to preserve global topology by only updating a subset of nodes which accumulate the largest topological changes over subsequent network snapshots. Then, an incremental SGNS is used to train; to compute an embedding at $t$, its training is initialised using the pre-trained weights of the SGNS at $t-1$ in order to keep the time points similar. As discussed in the introduction, this temporal mechanism is not sufficient for temporal stability as it only attempts to align subsequent time points. In addition, due to the use of incremental SGNS, the embedding at $t$ will be distorted to look like the embedding at $t-1$, which means that its spatial stability is also lost. Therefore GloDyNE has neither temporal nor spatial stability. 

Table \ref{tab:embedding_methods} provides a summary of each dynamic embedding method that we consider.

\begin{table}
    \centering
    \begin{tabular}{|p{2.3cm}|p{5.8cm}|p{3.7cm}|p{1.6cm}|}
    \hline
         \textbf{Embedding Method} & \textbf{Embedding Matrix} & \textbf{Description} & \textbf{Stability} \\
         \hline
         UASE & $\mathbfcal{A} = \left[\mathbf{A}^{(1)}, \dots, \mathbf{A}^{(T)}\right]$ & Spectral embedding & Both \\
         \hline
         URLSE & $\mathbf{L} = \left(\mathbf{D}_{\text{L}} - \gamma \right)^{-1/2} \mathbfcal{A} \left(\mathbf{D}_{\text{R}} - \gamma \right)^{-1/2}$ & Spectral embedding & Both \\
         \hline
         ISE & $\mathbf{A}^{(t)}$ & Spectral embedding & Spatial \\
         \hline
         ISE Procrust. & $\mathbf{A}^{(t)}, \mathbf{A}^{(t-1)}$ & Spectral embedding \& Procrustes alignment & Spatial \\
         \hline
         OMNI & $\mathbf{M}_{s, t} = \left(\mathbf{A}^{(s)} + \mathbf{A}^{(t)} \right) / 2$ & Spectral embedding & Temporal\\
        \hline
        Unfolded node2vec & $\mathbf{A} = \begin{bmatrix} \mathbf{0} & \mathbfcal{A} \\ \mathbfcal{A}^{\top} & \mathbf{0} \end{bmatrix}$ & SGNS & Both \\
        \hline
        Independent node2vec & $\mathbf{A}^{(t)}$ & SGNS & Spatial \\ 
        \hline
        GloDyNE & $\mathbf{A}^{(t)}$, $\mathbf{A}^{(t-1)}$ & Incremental SGNS \& topological updates & None \\
        \hline
    \end{tabular}
    \caption{A description of each dynamic embedding method we consider. To our knowledge, only unfolded methods have both spatial and temporal stability.}
    \label{tab:embedding_methods}
\end{table}

\subsection{Simulated Testing Setup}

We now evaluate how well each dynamic embedding can encode planted spatial and temporal changes at the full graph, community and node levels. As we can generate many random graphs with the same underlying distribution using the systems displayed in Table \ref{tab:DSBM_systems}, it is possible to generate a distribution of p-values for each method. By plotting the cumulative $\hat{p}$ distribution, we can conclude one of three outcomes. If the cumulative distribution is uniform (as decided by a Kolmogorov-Smirnov test), then we fail to reject the null hypothesis; there is no significant change encoded. If it is super-uniform (above 0.05 at the 5\% level), then we reject the null hypothesis; a change has been encoded. If it is sub-uniform (below 0.05 at the 5\% level), then we fail to reject the null hypothesis, but the test has a high false negative rate; the embedding is conservative. The height of the curve at the 5\% level gives the power of an embedding; a more powerful embedding is better at encoding changes. 

200 p-values were computed for each system. At the full graph level, $\mathscr{N}$ is the set of all nodes in the network. At the community level, we set $\mathscr{N} = \mathbb{I} \left(\tau = 1\right)$ or $\mathscr{N} = \mathbb{I} \left(\tau = 2\right)$ (the set of all nodes in community 1 or 2) for temporal testing and $\mathscr{N}_1 = \mathbb{I} \left(\tau = 1\right)$, $\mathscr{N}_2 = \mathbb{I} \left(\tau = 2\right)$ for spatial testing. Then, the temporal test at the community and graph level is $t \left(\hb{Y}_\mathscr{N}^{(1)}, \hb{Y}_\mathscr{N}^{(2)}\right)$ and the spatial test is $t \left(\hb{Y}_{\mathscr{N}_1}^{(2)}, \hb{Y}_{\mathscr{N}_2}^{(2)}\right)$. For node testing, we consider single nodes. As we cannot use a hypothesis test to compare only two positions, we instead generate 50 time points with the changepoint $t_c = 26$.  Then, for temporal node testing, we use $t\left(\hb{Y}_i^{(1, \dots, 25)}, \hb{Y}_i^{(26, \dots, 50)}\right)$ and for spatial node testing we use $t \left(\hb{Y}_i^{(26, \dots, 50)}, \hb{Y}_j^{(26, \dots, 50)}\right)$ where nodes $i$ and $j$ belong to different communities. In each case, we set $n_\text{sim} = 1000$. 

For each method, we set $d$ to be the rank of the corresponding noise-free embedding matrix (see Table \ref{tab:embedding_methods} for embedding matrices). As GloDyNE requires a single embedding dimension, we use the same $d$ as used by unfolded methods. While we believe that these are reasonable decisions, it is still an open question of exactly what $d$ will be optimum. Whilst we require no concept of a ``correct" $d$, in Appendix \ref{app:power_vs_dimension} we show some examples of how $d$ can affect the power of an embedding.

\subsection{Simulated Testing Results}

Tables \ref{tab:static_temporal_exps} and \ref{tab:moving_temporal_exps} display the results of the temporal testing experiments and tables \ref{tab:static_spatial_exps} and \ref{tab:moving_spatial_exps} display the results of the spatial testing experiments. We present the results of these experiments as coloured cumulative p-value distributions. For experiments where we have planted a change, we colour a distribution green if it encodes the change, grey if the change is not encoded significantly enough (the distribution is uniform), and red if the method is conservative. For experiments where there is no planted change, we colour a distribution green if there is no change encoded (but not conservative), blue if the method is conservative and red if a change is encoded.  

For the \textit{moving} systems, we can test on either the static community or the moving community. In these cases, if a method cannot encode the static community correctly, then we deem the method invalid for testing on the moving community. This is because if the stability cannot be encoded then the amount of movement in the moving community becomes immaterial.

We find that all methods without temporal stability fail to achieve a uniform p-value distribution on the static temporal experiments (Table \ref{tab:static_temporal_exps}). On the static graph system (the simplest possible case), these methods are at best conservative. While a conservative embedding does not represent a failure on the static temporal systems, this conservative nature significantly reduces power on the moving temporal experiments; ISE Procrustes even remains conservative on the moving graph system. In contrast, the methods with temporal stability (OMNI and unfolded methods) always return a uniform p-value distribution in the static temporal experiments. As they are never conservative, the temporally stable methods have more power in comparison to their unstable counterparts, as demonstrated by the moving systems. 

On the power-distributed moving graph, URLSE was the most powerful spectral embedding. This is a demonstration of how we can take results from the static network embedding literature, for example that regularised spectral embeddings perform well on sparse networks \citep{rohe_rlse}, and apply them to dynamic embedding through the selection of $\mathcal{F}$. 

\begin{table}
    \centering
     \begin{tabular} {|p{1.7cm}|p{1.41cm}|p{1.41cm}|p{1.41cm}|p{1.41cm}|p{1.41cm}|p{1.41cm}|p{1.41cm}|p{1.41cm}|}
         \hline
         \thead{Experiment} & \thead{ISE} & \thead{ISE \\ Procrust.} & \thead{OMNI} & \thead{UASE} & \thead{URLSE} & \thead{Indep. \\ node2vec} & \thead{GloDyNE} & \thead{Unfolded \\ node2vec} \\
         \hline
         \thead{Static\\Graph} & \cellcolor{blue}
         \cincludegraphics[width=1.55cm]{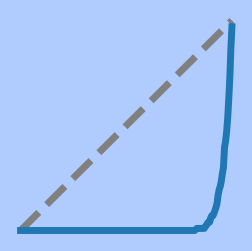}
         & \cellcolor{blue}
         \cincludegraphics[width=1.55cm]{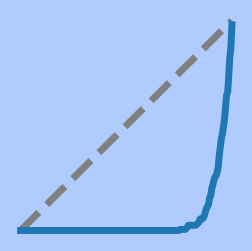}
         & \cellcolor{green}
         \cincludegraphics[width=1.55cm]{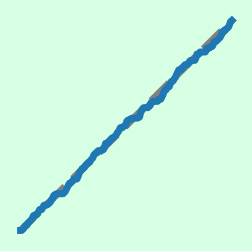}
         & \cellcolor{green}
         \cincludegraphics[width=1.55cm]{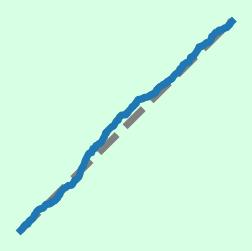}
         &\cellcolor{green}
         \cincludegraphics[width=1.55cm]{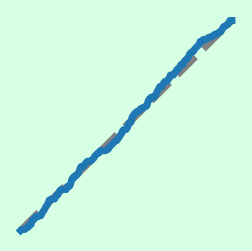}
         & \cellcolor{red}
         \cincludegraphics[width=1.55cm]{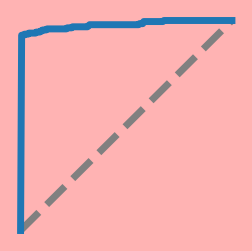}
         & \cellcolor{red}
         \cincludegraphics[width=1.55cm]{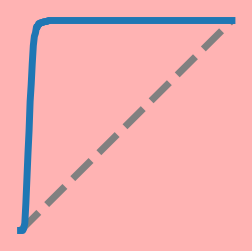}
         & \cellcolor{green}
         \cincludegraphics[width=1.55cm]{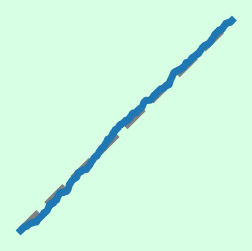}
         \\
      \hline
      \thead{Static\\Graph\\ Static \\Community} & 
      \cellcolor{green}
         \cincludegraphics[width=1.55cm]{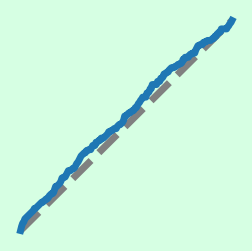}
         & 
      \cellcolor{red}
         \cincludegraphics[width=1.55cm]{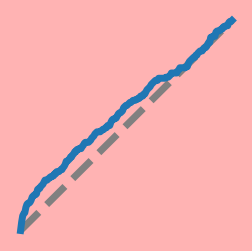}
         & \cellcolor{green}
         \cincludegraphics[width=1.55cm]{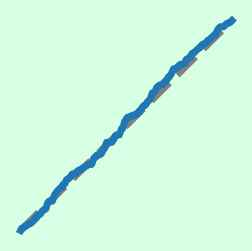}
         & \cellcolor{green}
         \cincludegraphics[width=1.55cm]{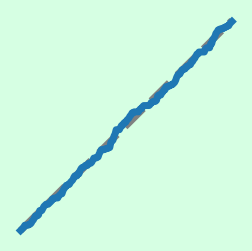}
         &\cellcolor{green}
         \cincludegraphics[width=1.55cm]{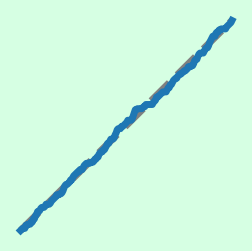}
         & \cellcolor{red}
         \cincludegraphics[width=1.55cm]{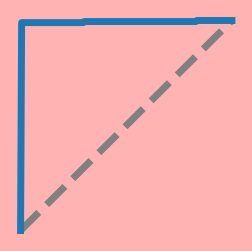}
         & \cellcolor{red}
         \cincludegraphics[width=1.55cm]{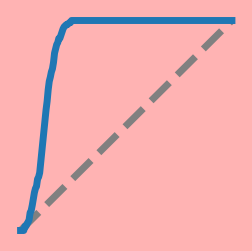}
         & \cellcolor{green}
         \cincludegraphics[width=1.55cm]{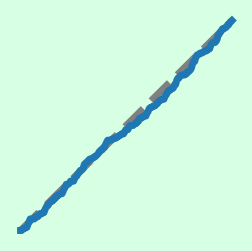}
         \\
         \hline
         \thead{Static\\Graph \\ Static Node} & \cellcolor{green}
         \cincludegraphics[width=1.55cm]{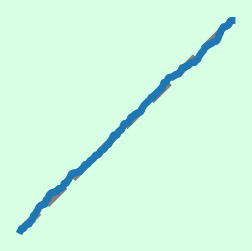}
         & \cellcolor{green}
         \cincludegraphics[width=1.55cm]{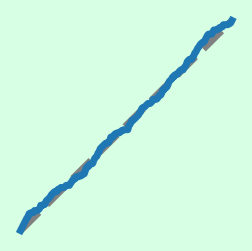}
         & \cellcolor{green}
         \cincludegraphics[width=1.55cm]{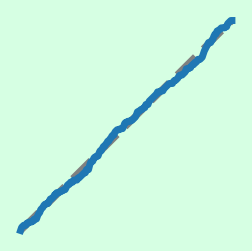}
         & \cellcolor{green}
         \cincludegraphics[width=1.55cm]{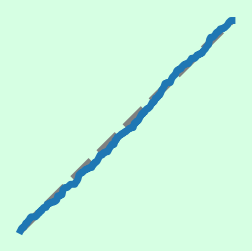}
         &\cellcolor{green}
         \cincludegraphics[width=1.55cm]{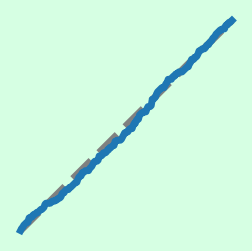}
         & \cellcolor{green}
         \cincludegraphics[width=1.55cm]{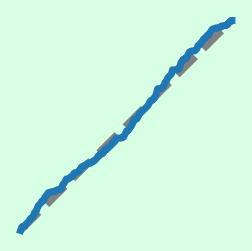}
         & \cellcolor{red}
         \cincludegraphics[width=1.55cm]{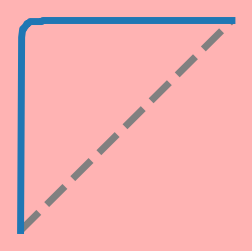}
         & \cellcolor{green}
         \cincludegraphics[width=1.55cm]{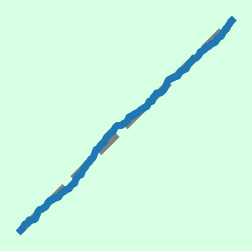}
        \\ 
        \hline
         \thead{Static\\Graph\\with Power\\Dist.} & \cellcolor{red}
         \cincludegraphics[width=1.55cm]{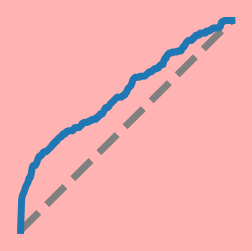}
         & \cellcolor{red}
         \cincludegraphics[width=1.55cm]{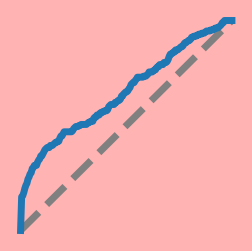}
         & \cellcolor{green}
         \cincludegraphics[width=1.55cm]{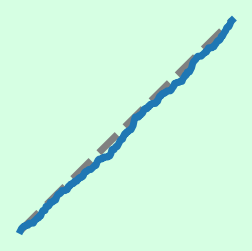}
         & \cellcolor{green}
         \cincludegraphics[width=1.55cm]{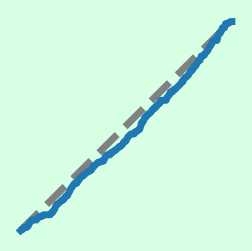}
         &\cellcolor{green}
         \cincludegraphics[width=1.55cm]{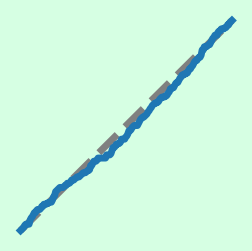}
         & \cellcolor{red}
         \cincludegraphics[width=1.55cm]{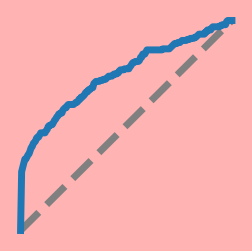}
         & \cellcolor{red}
         \cincludegraphics[width=1.55cm]{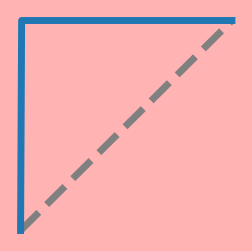}
         & \cellcolor{green}
         \cincludegraphics[width=1.55cm]{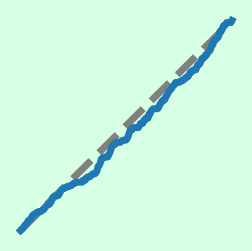}
        \\ 
        \hline

         \thead{Moving \\ Graph \\ Static \\ Community} & 
         \cellcolor{red}
         \cincludegraphics[width=1.55cm]{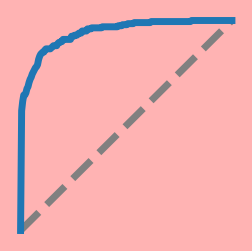}
         & 
         \cellcolor{blue}
         \cincludegraphics[width=1.55cm]{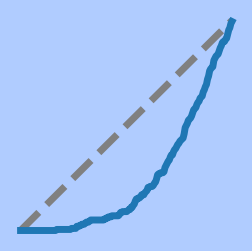}
         &
         \cellcolor{green}
         \cincludegraphics[width=1.55cm]{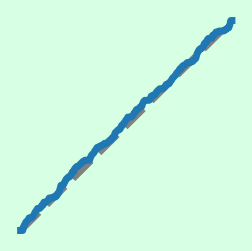}
         & 
         \cellcolor{green}
         \cincludegraphics[width=1.55cm]{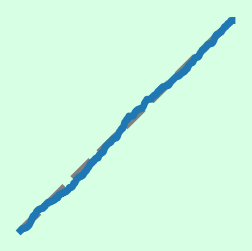}
         &
         \cellcolor{green}
         \cincludegraphics[width=1.55cm]{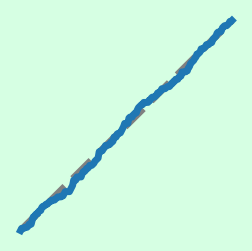}
         & 
          \cellcolor{red}
         \cincludegraphics[width=1.55cm]{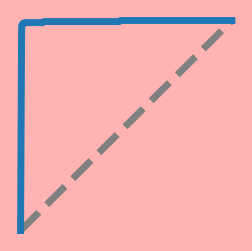}
         & 
          \cellcolor{red}
         \cincludegraphics[width=1.55cm]{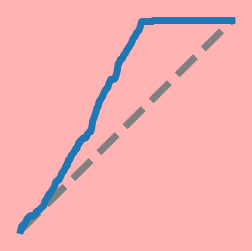}
         & 
          \cellcolor{green}
         \cincludegraphics[width=1.55cm]{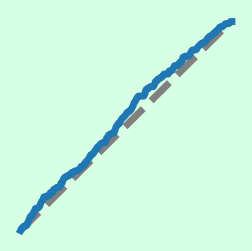}
         \\
         \hline
         \thead{Moving \\ Graph \\ Static Node} & \cellcolor{red}
         \cincludegraphics[width=1.55cm]{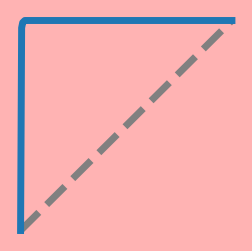}
         & \cellcolor{green}
         \cincludegraphics[width=1.55cm]{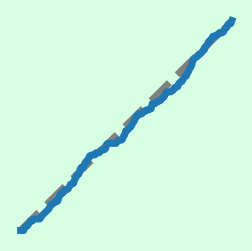}
         & \cellcolor{green}
         \cincludegraphics[width=1.55cm]{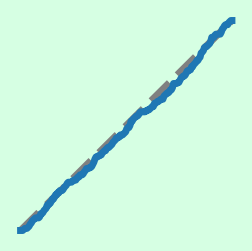}
         & \cellcolor{green}
         \cincludegraphics[width=1.55cm]{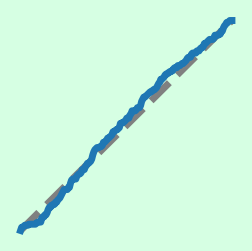}
         &\cellcolor{green}
         \cincludegraphics[width=1.55cm]{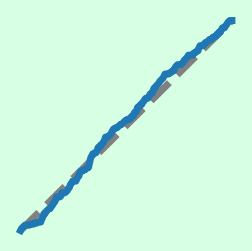}
         & \cellcolor{green}
         \cincludegraphics[width=1.55cm]{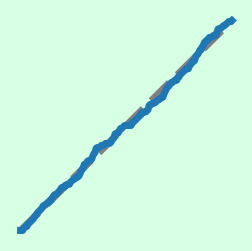}
         & \cellcolor{red}
         \cincludegraphics[width=1.55cm]{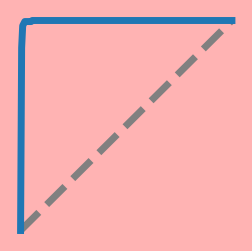}
         & \cellcolor{green}
         \cincludegraphics[width=1.55cm]{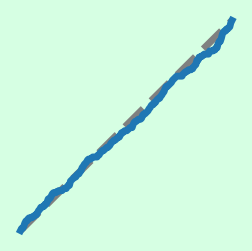}
         \\
         \hline
    \end{tabular}

          \vspace{0.4 cm}

      \centering
      \begin{tabular}{|c|c|c|c|}
          \hline
          \textbf{Key} & 
          \cellcolor{green} Uniform & 
          \cellcolor{red} High False Positive Rate & 
          \cellcolor{blue} High False Negative Rate \\
          \hline
      \end{tabular}

      \caption{Summary of temporal testing experiments on static nodes. The outcome of each experiment is determined using a plot of the cumulative $\hat{p}$ distribution, which is expected to be uniform for these experiments. We see that only temporally stable methods (OMNI and unfolded methods) consistently encode the correct information. }
    \label{tab:static_temporal_exps}
      
\end{table}
\begin{table}
    \begin{tabular}{|p{1.7cm}|p{1.41cm}|p{1.41cm}|p{1.41cm}|p{1.41cm}|p{1.41cm}|p{1.41cm}|p{1.41cm}|p{1.41cm}|}
         \hline
         \thead{Experiment} & \thead{ISE} & \thead{ISE \\ Procrust.} & \thead{OMNI} & \thead{UASE} & \thead{URLSE} & \thead{Indep. \\ node2vec} & \thead{GloDyNE} & \thead{Unfolded \\ node2vec} \\

        \hline
         \thead{Moving \\ Graph} & \cellcolor{green}
         \cincludegraphics[width=1.55cm]{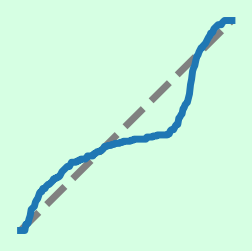}
         & \cellcolor{red}
         \cincludegraphics[width=1.55cm]{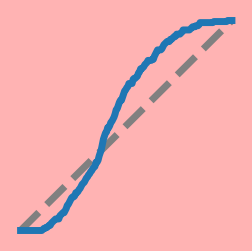}
         &\cellcolor{green}
         \cincludegraphics[width=1.55cm]{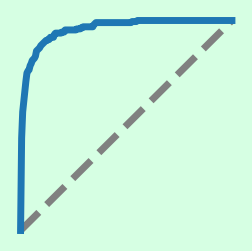}
         & \cellcolor{green}
         \cincludegraphics[width=1.55cm]{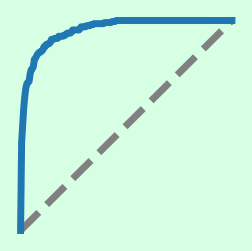}
         &\cellcolor{green}
         \cincludegraphics[width=1.55cm]{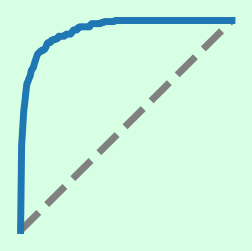}
         & \cellcolor{white}
          \thead{Invalid}
         & \cellcolor{white}
          \thead{Invalid}
         & \cellcolor{green}
         \cincludegraphics[width=1.55cm]{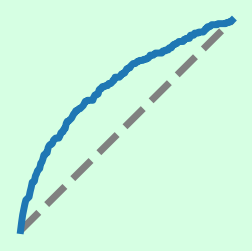}
         \\
         \hline
          \thead{Moving\\Graph\\with Power\\Dist.} & 
          \cellcolor{white}
          \thead{Invalid}
         &  \cellcolor{white}
          \thead{Invalid}
         &
         \cellcolor{green}
         \cincludegraphics[width=1.55cm]{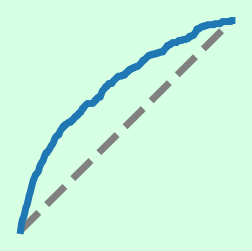}
         &  \cellcolor{green}
         \cincludegraphics[width=1.55cm]{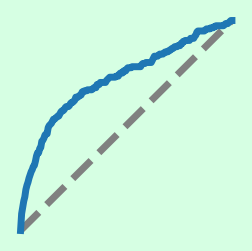}
         & \cellcolor{green}
         \cincludegraphics[width=1.55cm]{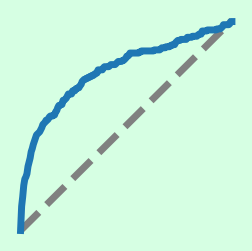}
         & \cellcolor{white}
          \thead{Invalid}
         & \cellcolor{white}
          \thead{Invalid}
         & \cellcolor{grey}
         \cincludegraphics[width=1.55cm]{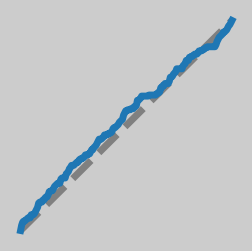}
         \\
         \hline
         \thead{Moving \\ Graph\\ Moving \\ Community}& 
         \cellcolor{white}
          \thead{Invalid}
         & \cellcolor{white}
          \thead{Invalid}
         &\cellcolor{green}
         \cincludegraphics[width=1.55cm]{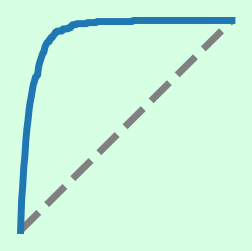}
         & \cellcolor{green}
         \cincludegraphics[width=1.55cm]{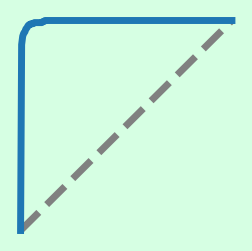}
         &\cellcolor{green}
         \cincludegraphics[width=1.55cm]{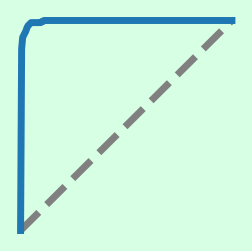}
         & \cellcolor{white}
          \thead{Invalid}
         & \cellcolor{white}
          \thead{Invalid}
         & \cellcolor{green}
         \cincludegraphics[width=1.55cm]{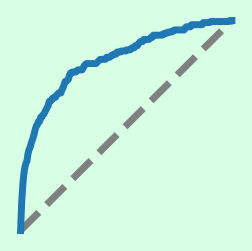}
         \\
         \hline
         \thead{Moving \\ Graph \\ Moving \\ Node} & 
          \cellcolor{white}
          \thead{Invalid}
         &  \cellcolor{green}
         \cincludegraphics[width=1.55cm]{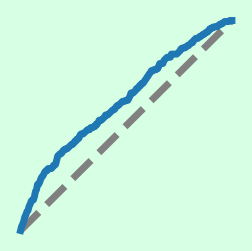}
         &
         \cellcolor{green}
         \cincludegraphics[width=1.55cm]{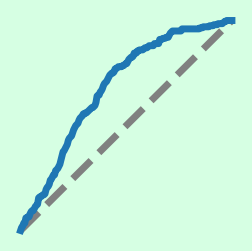}
         &  \cellcolor{green}
         \cincludegraphics[width=1.55cm]{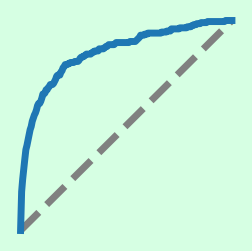}
         & \cellcolor{green}
         \cincludegraphics[width=1.55cm]{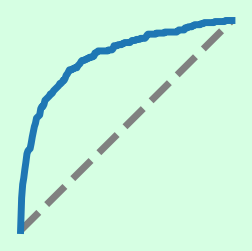}
         & \cellcolor{grey}
         \cincludegraphics[width=1.55cm]{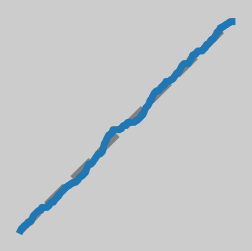}
         & \cellcolor{white}
          \thead{Invalid}
         & \cellcolor{green}
         \cincludegraphics[width=1.55cm]{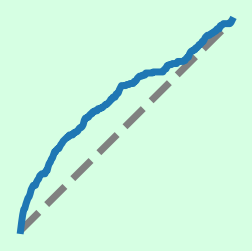}
         \\
         \hline

      \end{tabular}
      
      \vspace{0.4 cm}

      \centering
      \begin{tabular}{|c|c|c|c|}
          \hline
          \textbf{Key} & 
          \cellcolor{grey} Uniform & 
          \cellcolor{green} High Positive Rate & 
          \cellcolor{red} High False Negative Rate \\
          \hline
      \end{tabular}
      
      \caption{Summary of temporal testing experiments with moving nodes. The outcome of each experiment is determined using a plot of the cumulative $\hat{p}$ distribution, where a higher curve implies a more powerful embedding. A cell is left ``Invalid" when the method had failed on the static version of the experiment, listed in Table \ref{tab:static_temporal_exps}. In the power distributed example URLSE has the most power, demonstrating its advantage over UASE in sparse systems.}
      \label{tab:moving_temporal_exps}
\end{table}

In the static spatial test, all methods were able to encode a difference between the two communities. In these small-scale systems, the power of node2vec was below that of spectral embedding. Despite this, we see that unfolded node2vec is the method with the most power, being far greater than that of independent node2vec and GloDyNE. This demonstrates how unfolding can borrow power from other similar time points, without having to sacrifice temporal stability. In the merging spatial experiments only the spatially unstable methods, OMNI and GloDyNE, failed. 

Overall, these experiments demonstrate that unstable methods are at best under-powered or conservative. However, even on trivial systems, they more often encode incorrect structure. Notably, on the \textit{static graph} system (the simplest possible dynamic network), a Procrustes alignment causes the embedding to be conservative, and GloDyNE encodes incorrect structure. In contrast, unfolded methods are never conservative. With this consistent validity, we are free to select an $\mathcal{F}$ that will increase the power of the embeddings, similar to how we set $\mathcal{F}$ to be RLSE to gain extra power on the moving sparse system.

\begin{table}
    \centering
    \begin{tabular}{|p{1.7cm}|p{1.41cm}|p{1.41cm}|p{1.41cm}|p{1.41cm}|p{1.41cm}|p{1.41cm}|p{1.41cm}|p{1.41cm}|}
         \hline
         \thead{Experiment} & \thead{ISE} & \thead{ISE \\ Procrust.} & \thead{OMNI} & \thead{UASE} & \thead{URLSE} & \thead{Indep. \\ node2vec} & \thead{GloDyNE} & \thead{Unfolded \\ node2vec} \\
         \hline

         \thead{Static\\Graph\\Community} & \cellcolor{green}
         \cincludegraphics[width=1.55cm]{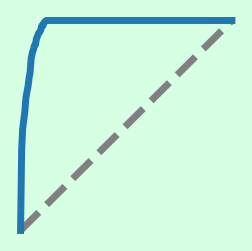}
         & \cellcolor{green}
         \cincludegraphics[width=1.55cm]{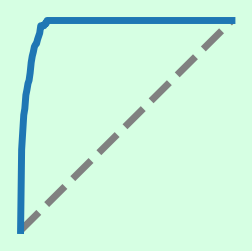}
         &\cellcolor{green}
         \cincludegraphics[width=1.55cm]{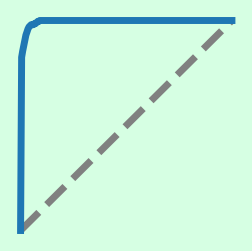}
         & \cellcolor{green}
         \cincludegraphics[width=1.55cm]{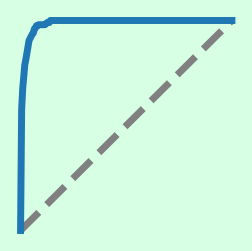}
         &\cellcolor{green}
         \cincludegraphics[width=1.55cm]{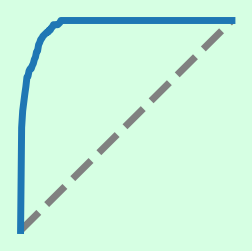}
         & \cellcolor{green}
         \cincludegraphics[width=1.55cm]{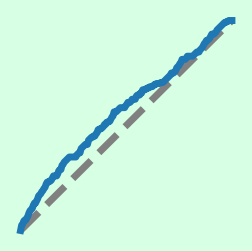}
         & \cellcolor{green}
         \cincludegraphics[width=1.55cm]{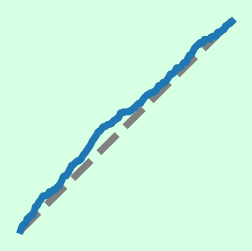}
         & \cellcolor{green}
         \cincludegraphics[width=1.55cm]{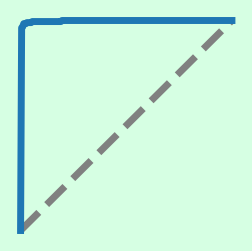} \\
         \hline
    \end{tabular}

       \vspace{0.4 cm}

            \begin{tabular}{|c|c|c|c|}
          \hline
          \textbf{Key} & 
          \cellcolor{grey} Uniform & 
          \cellcolor{green} High Positive Rate & 
          \cellcolor{red} High False Negative Rate \\
          \hline
      \end{tabular}

      \caption{A spatial test on a static system. The outcome of each method is determined using a plot of the cumulative $\hat{p}$ distribution, with super-uniform curves implying that the method can correctly separate the two communities. We see that unfolded node2vec is significantly more powerful than its skip-gram counterparts. This demonstrates the ability of unfolded methods to borrow strength across similar snapshots.}
      \label{tab:static_spatial_exps}
\end{table}
\begin{table}

    \centering
    \begin{tabular}{|p{1.7cm}|p{1.41cm}|p{1.41cm}|p{1.41cm}|p{1.41cm}|p{1.41cm}|p{1.41cm}|p{1.41cm}|p{1.41cm}|}
    \hline
         \thead{Experiment} & \thead{ISE} & \thead{ISE \\ Procrust.} & \thead{OMNI} & \thead{UASE} & \thead{URLSE} & \thead{Indep. \\ node2vec} & \thead{GloDyNE} & \thead{Unfolded \\ node2vec} \\
         \hline

      \thead{Merge \\ Community} & \cellcolor{green}
         \cincludegraphics[width=1.55cm]{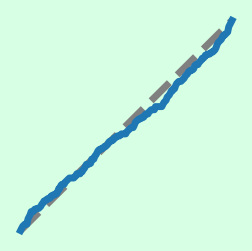}
         & \cellcolor{green}
         \cincludegraphics[width=1.55cm]{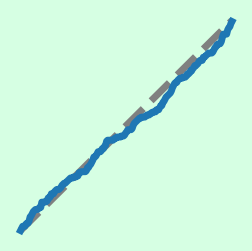}
         &\cellcolor{red}
         \cincludegraphics[width=1.55cm]{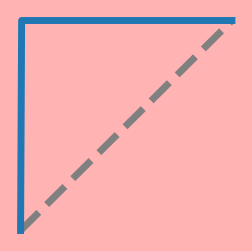}
         & \cellcolor{green}
         \cincludegraphics[width=1.55cm]{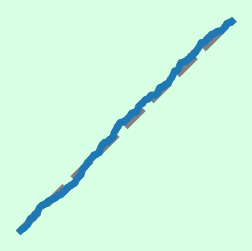}
         &\cellcolor{green}
         \cincludegraphics[width=1.55cm]{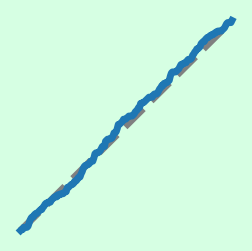}
         & \cellcolor{green}
         \cincludegraphics[width=1.55cm]{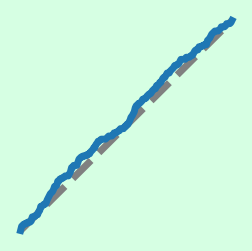}
         & \cellcolor{red}
         \cincludegraphics[width=1.55cm]{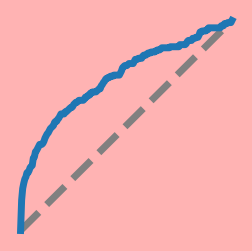}
         & \cellcolor{green}
         \cincludegraphics[width=1.55cm]{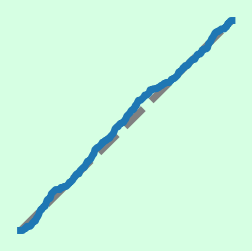}
         \\
         \hline
         \thead{Merge\\Node} & 
         \cellcolor{green}
         \cincludegraphics[width=1.55cm]{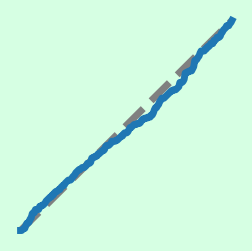}
         & \cellcolor{green}
         \cincludegraphics[width=1.55cm]{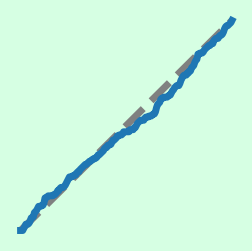}
         & \cellcolor{red}
         \cincludegraphics[width=1.55cm]{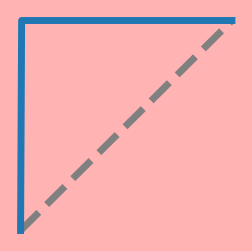}
         & \cellcolor{green}
         \cincludegraphics[width=1.55cm]{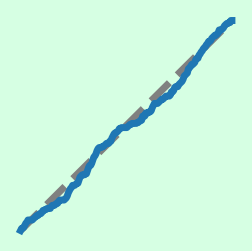}
         &\cellcolor{green}
         \cincludegraphics[width=1.55cm]{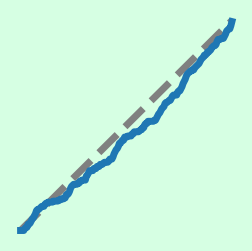}
         & \cellcolor{green}
         \cincludegraphics[width=1.55cm]{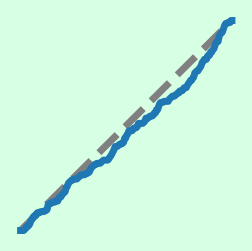}
         & \cellcolor{red}
         \cincludegraphics[width=1.55cm]{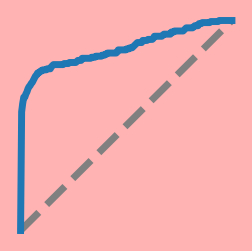}
         & \cellcolor{green}
         \cincludegraphics[width=1.55cm]{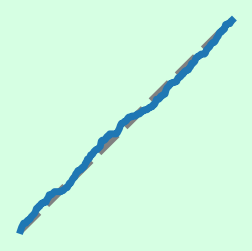}
         \\
         \hline
      \end{tabular}
      
            \vspace{0.4 cm}

      \centering
      \begin{tabular}{|c|c|c|c|}
          \hline
          \textbf{Key} & 
          \cellcolor{green} Uniform & 
          \cellcolor{red} High False Positive Rate & 
          \cellcolor{blue} High False Negative Rate \\
          \hline
      \end{tabular}
      
      \caption{
      Merged nodes tested for spatial separation. The outcome of each experiment is determined using a plot of the cumulative $\hat{p}$ distribution, which is expected to be uniform here. OMNI and GloDyNE fail here as they have temporal mechanisms that compromise spatial stability; the merged node communities appear separate due to a separation in a previous time point.}
      \label{tab:moving_spatial_exps}
\end{table}

\FloatBarrier

\section{Temporal Analysis of a Flight Network Over the COVID-19 Pandemic}\label{sec:flight_analysis}
In this section, we consider the OpenSky flight network data set where flights from 2019 to 2021 are tracked, resulting in a dynamic network of 17,388 airports connected by recorded flight paths over 36 network snapshots \citep{flight_dataset}. 

The temporal structure of European airports, encoded by a URLSE, is displayed in Figure \ref{fig:europe_projection_urlse}. For visualisation, we keep only one dimension of the embedding and then we plot the mean one-dimensional position of Italian and UK airports, as well as a mean across all European airports. Before the start of the COVID-19 pandemic (roughly March 2020 for Europe), the network cycles between a summer/autumn state and a winter/spring state as seasonal destinations become more and less popular. As well as this periodic structure, it is possible to see the two main European waves of the COVID-19 pandemic, with the first wave being shorter and more disruptive in comparison to the second. After this second wave, the network returns to its normal periodic behaviour. This return to normality, after the significant disruption of the pandemic, is possible due to the temporal stability of URLSE. Supplementary Figure \ref{fig:continent_1d_plots} shows the embeddings of other continents, and Supplementary Figure \ref{fig:embedding_vis} shows the spatial structure encoded in a time-invariant anchor URLSE.

\begin{figure}[p]
    \centering
    \includegraphics[width=.9\linewidth]{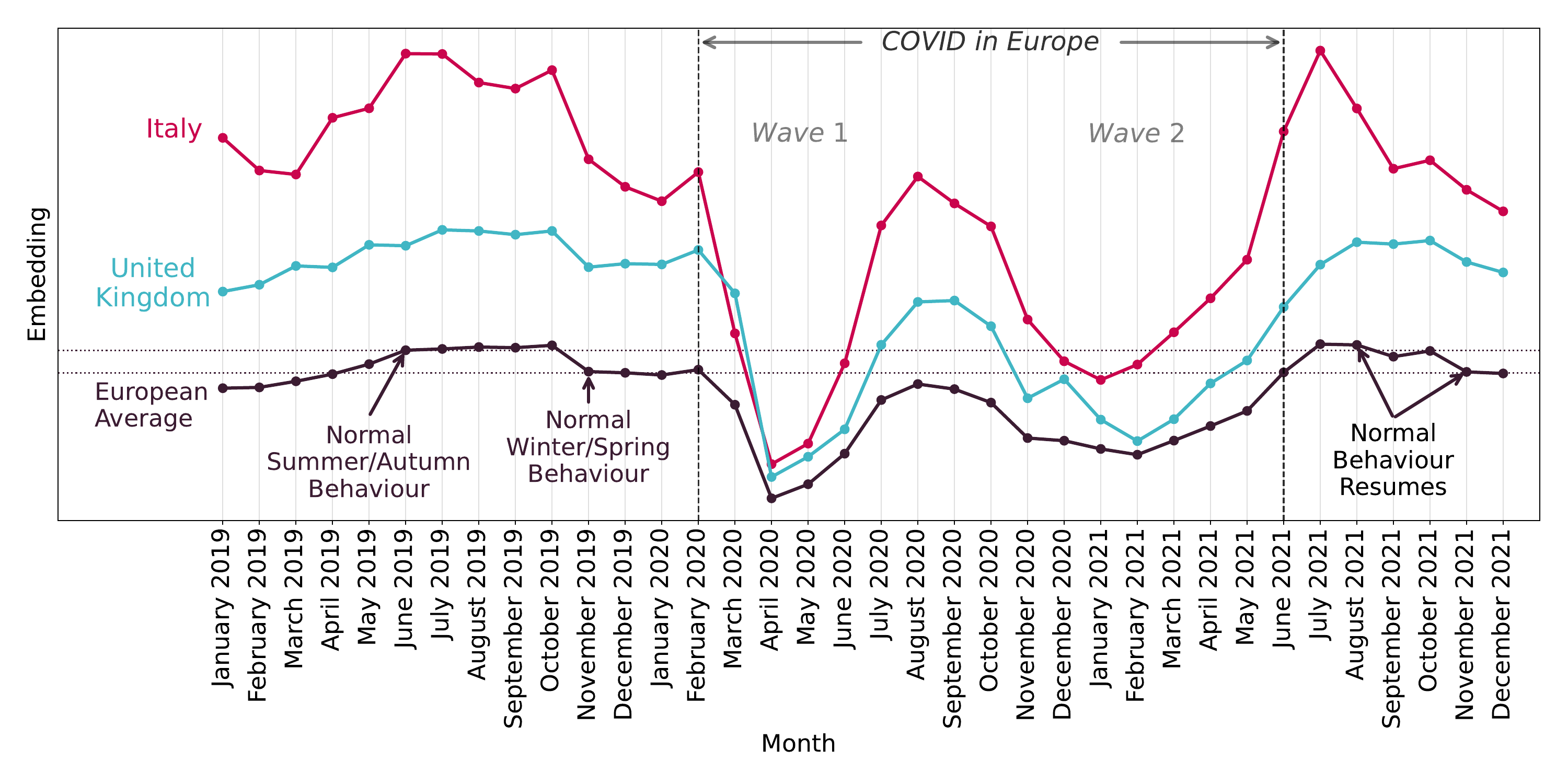}
    \caption{A one-dimensional URLSE displaying the average temporal structure of airports in Italy, the UK and across Europe. The embedding encodes the two main European waves of COVID-19 as well as an annual periodic structure before and after the pandemic. We also see that the pandemic hit Italy slightly before the UK and the rest of Europe.}
    \label{fig:europe_projection_urlse}
\end{figure}

\subsection{Temporal Clustering}
In order to quantify higher-dimensional changes we perform temporal clustering. We first construct a temporal dissimilarity matrix $\mathbf{R} \in \R^{T \times T}$ between each embedding time point, where we use the temporal paired displacement test statistic as a dissimilarity measure, $\mathbf{R}_{ij} = t \left(\hb{Y}_{\mathscr{N}}^{(i)},\hb{Y}_{\mathscr{N}}^{(j)} \right)$. We can then recover the temporal structure present in the network by k-means clustering on $\mathbf{R}$. We emphasise that this is intended as a visual guide as to which time-points are similar, rather than an assumption that there are real underlying clusters. 

Figure \ref{fig:Hierarchical_Clustering_on_C5_Data} displays the temporal clustering on both URLSE and UASE. For context, we additionally plot the number of European countries with travel restrictions in response to the pandemic \citep{covid_policy_tracker}, which can be used as a measure for the magnitude of disruption caused by the pandemic. We see that both URLSE and UASE cluster months based on the severity of the pandemic, having clusters that change as the waves of COVID-19 come and go. However, the embeddings place a larger emphasis on different qualities of the network. Given five clusters, UASE will prioritise encoding a difference in pre- and post-pandemic network behaviour. In contrast, URLSE prioritises representing the winter/summer periodicity of the network, and instead clusters the post-pandemic network as a return to normality. This demonstrates the flexibility of our framework; we can select an $\mathcal{F}$ which is best suited to the features of the network, without worrying about losing temporal stability. Supplementary Figure \ref{fig:various_clustering_1} shows the effect of using different numbers of clusters and varying the embedding dimension.

\begin{figure}[ht]
\centering
\begin{subfigure}[b]{\textwidth}
\centering
\includegraphics[width=.78\textwidth]{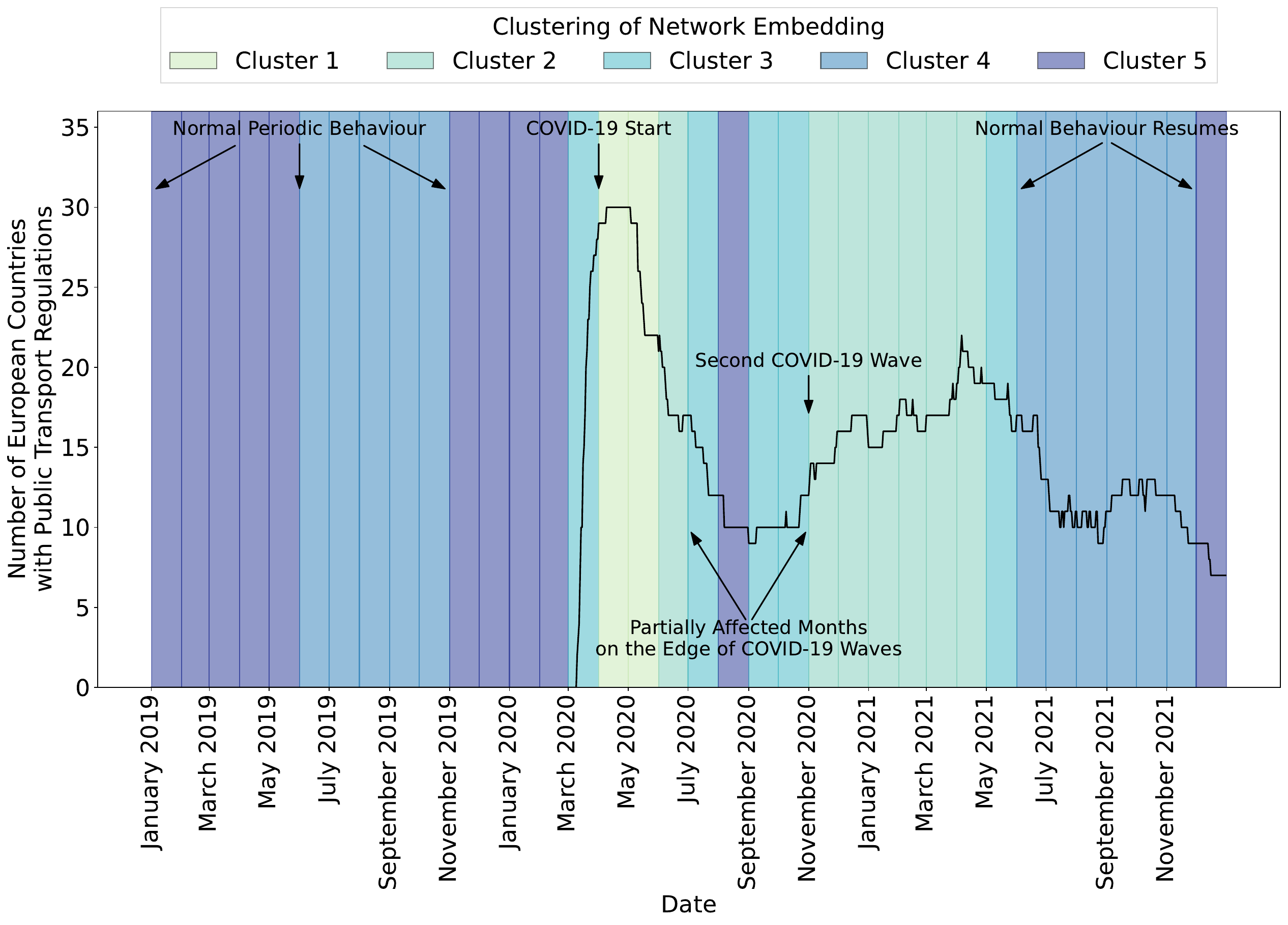}
\caption{URLSE.}
\end{subfigure}
\begin{subfigure}[b]{\textwidth}
\centering
\includegraphics[width=.78\textwidth]{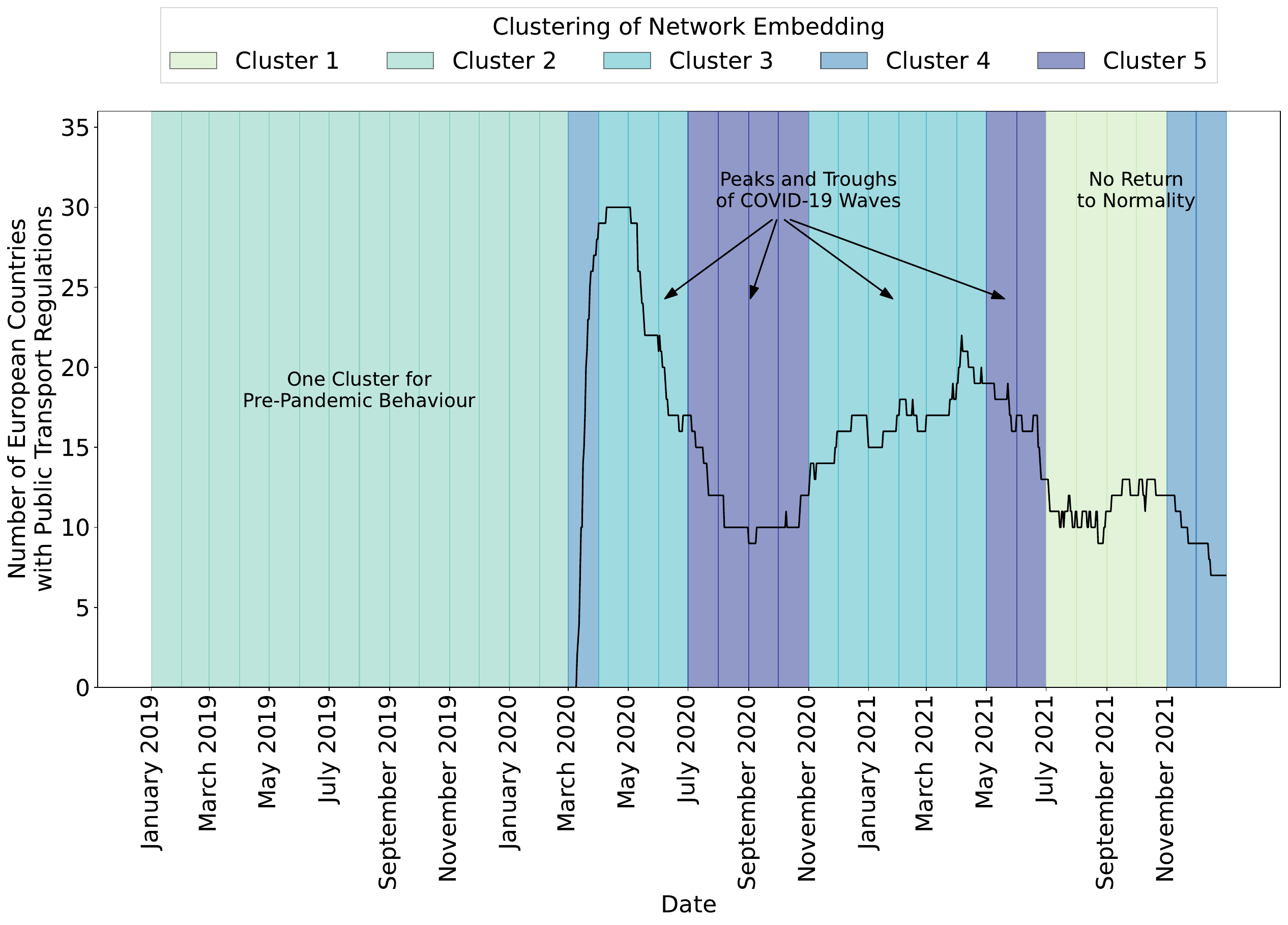}
\caption{UASE.}
\end{subfigure}
\caption{The temporal clustering of 50-dimensional unfolded spectral embeddings, including only European airports, shown with the number of European countries with restrictions on travel in response to the COVID-19 pandemic. URLSE focuses more on the periodicity of the network, assigning periodic clusters both before and after the pandemic. UASE focuses more on the disruption during the pandemic, assigning clusters for the peaks and troughs of each wave. UASE does not encode a return to normality after the pandemic.}
\label{fig:Hierarchical_Clustering_on_C5_Data}
\end{figure}

\subsection{Comparison of Skip-gram Dynamic Embedding Methods}

As an illustration of why stability is important in practice, Figure \ref{fig:skip_gram_continent_projections} compares our stable skip-gram dynamic embedding, unfolded node2vec, with an unstable one, GloDyNE. Currently, GloDyNE cannot embed multi-graphs so these embeddings are computed on binary adjacency matrices for the purpose of comparison. For visualisation, we reduce the 150-dimensional embeddings to one using PCA.

GloDyNE is an example of how the temporal smoothness assumption leads to artificially smoothed embeddings. The embedding fails not only to encode the repeating periodic structure but also to encode any significant change for the pandemic. In stark contrast to this, we see that unfolded node2vec is able to encode both the annual periodic structure of the network as well as the abrupt changes that occur as the pandemic hits Italy and the UK (with Italy being hit slightly earlier than the UK). This is a clear demonstration of how much detail it is possible to encode in a dynamic embedding which is stable in comparison to a similar, but unstable, method. 

Table \ref{tab:computation_times} lists the computation times for each method to embed the network. In addition to its increased detail, unfolded node2vec was computed almost seven times faster than GloDyNE with the same skip-gram parameters. We leave in the appendix Supplementary Figure \ref{fig:computation_times}, which shows how embedding computation times vary depending on the temporal mechanism used for networks of increasing size. 

\begin{figure}[p]
    \centering
    \begin{subfigure}{\textwidth}
        \includegraphics[width=\linewidth]{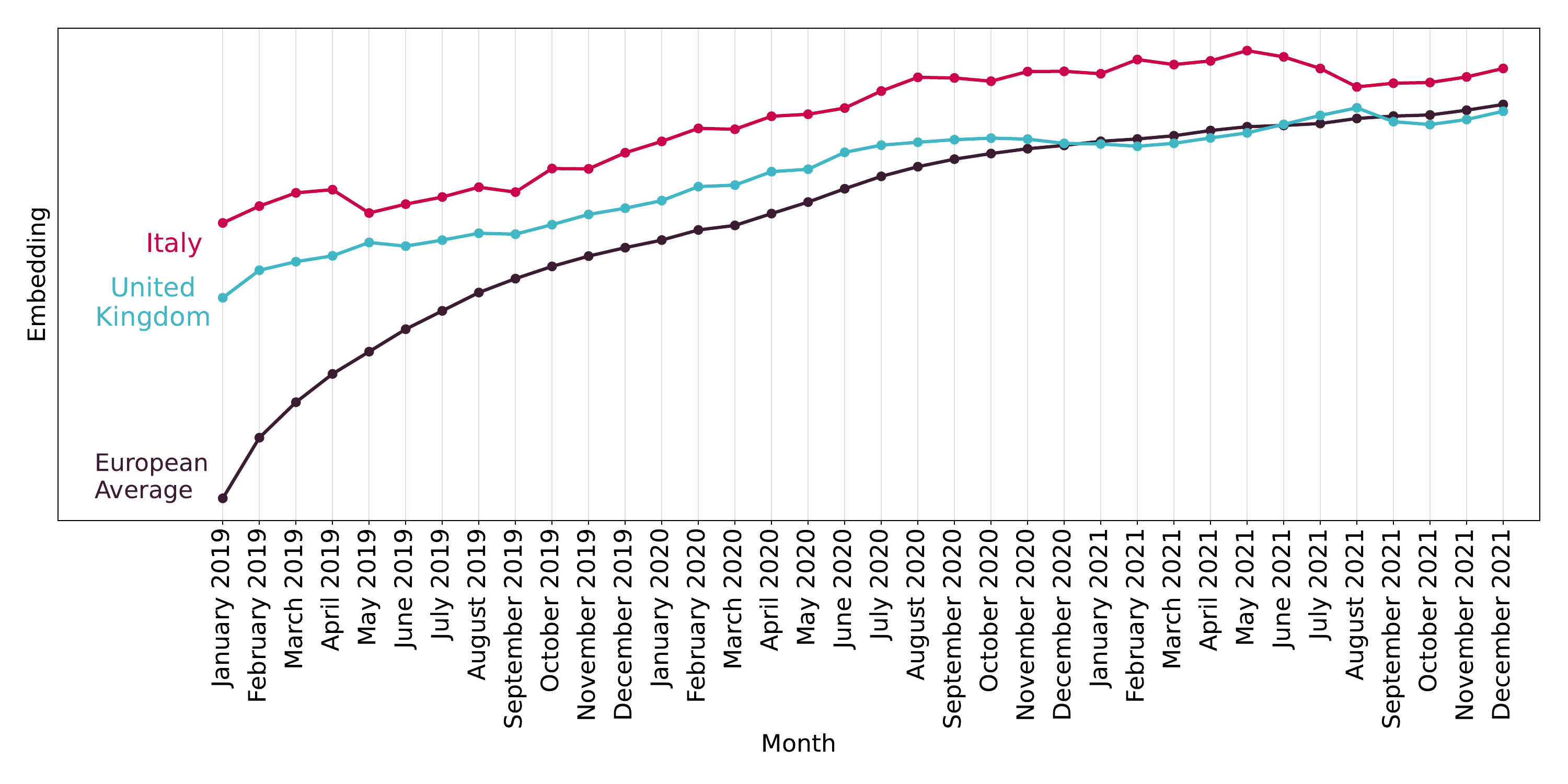}
        \caption{GloDyNE.}
    \end{subfigure}
    \begin{subfigure}{\textwidth}
        \includegraphics[width=\linewidth]{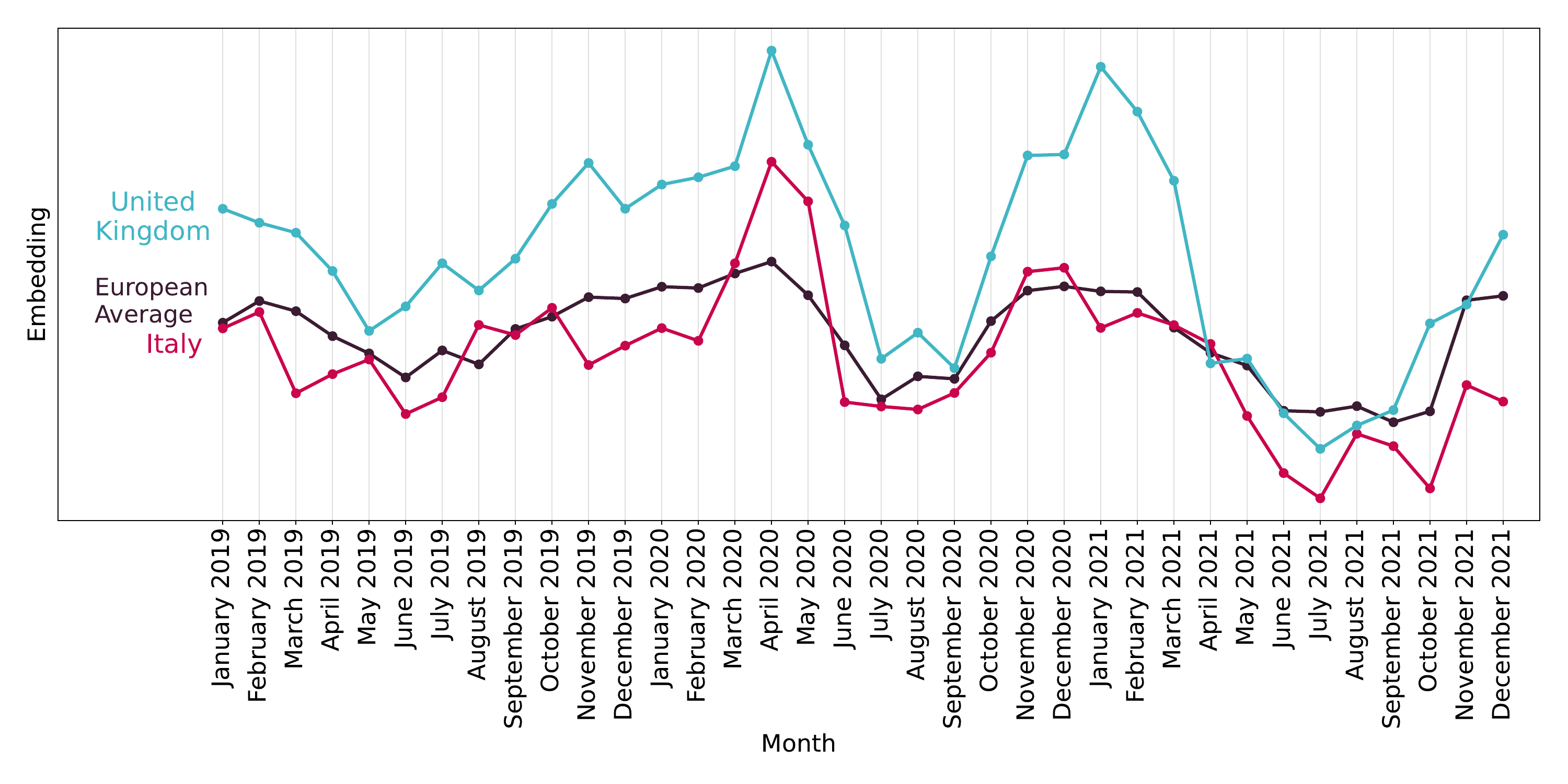}
        \caption{Unfolded node2vec.}
    \end{subfigure}
    \caption{Two 150-dimensional skip-gram dynamic embedding methods on the binary flight network (i.e. no multi-edges) reduced to one dimension using PCA. Traces are then given by the average embedding position for airports in the UK, Italy, and Europe. Unfolded node2vec encodes the annual periodic nature of the network as well as abrupt changes for Italy and the UK (Italy slightly before the UK) as they are hit by the pandemic. GloDyNE encodes a smooth continuous change over the entire series, encoding no repeated structure or abrupt changes. }
    \label{fig:skip_gram_continent_projections}
\end{figure}

\begin{table}
    \centering
    \begin{tabular}{|p{2.3cm}|p{2.3cm}|p{2.8cm}|p{3.7cm}|c|}
    \hline
         \textbf{Embedding Method} & \textbf{Embedding Dimension} & \textbf{Computation Time} & \textbf{Description} & \textbf{Stability} \\
         \hline
         UASE & 50 & 9 s & Spectral embedding & Both \\
         \hline
         URLSE & 50 & 12 s & Spectral embedding & Both \\
         \hline
         Unfolded node2vec & 150 & 54 min 48 s & SGNS & Both \\
         \hline
        GloDyNE & 150 & 6 hrs 10 min 50 s & Incremental SGNS \& topological updates & None \\
        \hline
    \end{tabular}
    \caption{Computation times to embed the dynamic flight network. The network has 17,388 nodes and 36 time points. Embeddings were computed using an AMD Ryzen 5 3600 processor.}
    \label{tab:computation_times}
\end{table}

\clearpage
\section{Conclusion}\label{sec:conclusion}
In this paper, we have solved the problem of how to adapt a wide class of established static network embedding methods to produce stable dynamic network embeddings. By embedding the dilated unfolded adjacency matrix we gain stability without the need for a central limit theorem and without requiring the concept of a ``correct" embedding dimension. This can be trivially implemented for any label-invariant network embedding. Upon the foundation of stability that is guaranteed by unfolding, we can then select a static embedding, $\mathcal{F}$, to gain statistical power. For example, we demonstrated that URLSE was more powerful than UASE on simulated sparse networks. Due to the flexibility of this framework, we expect that as the field of static network embedding progresses, the field of dynamic embedding is able progress at the same rate through the selection of $\mathcal{F}$. 

This paper additionally introduced a hypothesis test which can be used to evaluate how well dynamic embedding methods encode structure. Even in the simplest possible case, unstable dynamic embedding methods are at best conservative, but more often encode incorrect structure. This test provides a previously unseen view of how embeddings encode structure and it highlights why intuitive alignment-based methods are fundamentally flawed for comparing embeddings across time. This test also highlights the significant variation present in unstable embeddings when the selection of $d$ changes, something that is not a problem with unfolded embedding. We finally demonstrated the practical advantage that unfolded methods possess over unstable methods by applying them to a pandemic-disrupted dynamic flight network. Here, only the unfolded methods could encode the periodicity and abrupt changes present in the network.

Future research includes experimenting with different selections of $\mathcal{F}$ and determining what embedding dimension will produce the most powerful embedding. There is also future research available into what downstream analysis we can perform on stable dynamic embeddings now that we have a guarantee of exchangeability.

\renewcommand{\thefigure}{S\arabic{figure}}
\setcounter{figure}{0}

\FloatBarrier
\appendix
\section{Code Availability}
Python code for reproducing all embeddings and experiments can be found at \url{https://github.com/edwarddavis1/universal_dynamic_embedding_with_testing}.

\section{Proof of Proposition \ref{prop:gen_unf}}\label{app:prop_proof}
\begin{proof}
Let \(\mathbfcal{A}\) have singular values \(\sigma_i\) and left and right singular vectors \(u_i\), \(v_i\) respectively for each \(i \in \{1, \dots, d\}\). It is then true that its symmetric dilation (the dilated unfolded adjacency matrix) \(\mathbf{A}\) will have eigenvalues \(\pm \sigma_i\) with corresponding eigenvectors \(\left\{ \frac{1}{\sqrt{2}} (u_i, \pm v_i) \right\}\) \citep{horn_johnson_2012}, meaning that the spectrum of \(\mathbfcal{A}\) is completely encoded in \(\mathbf{A}\). Then, for an appropriately chosen rank-$2d$ (truncated) eigendecomposition of a \(\mathbf{A}\), we have
\begin{align*}
\mathbf{A} \approx \hb{U}_{\mathbf{A}} \hb{\Lambda}_{\mathbf{A}} \hb{U}_{\calA}^{\top} 
= \frac{1}{\sqrt{2}} \begin{bmatrix} \hb{U}_{\calA} & \hb{U}_{\calA} \\ \hb{V}_{\calA} & -\hb{V}_{\calA} \end{bmatrix} \begin{bmatrix} \hb{\Sigma}_{\calA} & \mathbf{0} \\ \mathbf{0} & -\hb{\Sigma}_{\calA} \end{bmatrix} \frac{1}{\sqrt{2}} \begin{bmatrix} \hb{U}_{\calA} & \hb{U}_{\calA} \\ \hb{V}_{\calA} & -\hb{V}_{\calA} \end{bmatrix}^\top.
\end{align*}
An appropriately chosen spectral embedding of \(\mathbf{A}\) into \(\R^{2d}\) then contains scaled rank-\(d\) anchor and dynamic UASEs, 
\begin{align*}
\hb{U}_{\mathbf{A}} | \hb{\Lambda}_{\mathbf{A}} |^{1/2} &= \frac{1}{\sqrt{2}} \begin{bmatrix} \hb{U}_{\calA} & \hb{U}_{\calA} \\ \hb{V}_{\calA} & -\hb{V}_{\calA} \end{bmatrix} \left| \begin{bmatrix} \hb{\Sigma}_{\calA} & \mathbf{0} \\ \mathbf{0} & -\hb{\Sigma}_{\calA} \end{bmatrix} \right|^{1/2} \\ &= \frac{1}{\sqrt{2}} \begin{bmatrix} \hb{U}_{\calA} \hb{\Sigma}_{\calA}^{1/2} & \hb{U}_{\calA} \hb{\Sigma}_{\calA}^{1/2} \\ \hb{V}_{\calA} \hb{\Sigma}_{\calA}^{1/2} & - \hb{V}_{\calA} \hb{\Sigma}_{\calA}^{1/2} \end{bmatrix} \\ &= \frac{1}{\sqrt{2}} \begin{bmatrix} \hb{X}_{\calA} & \hb{X}_{\calA} \\ \hb{Y}_{\calA} & -\hb{Y}_{\calA} \end{bmatrix}.
\end{align*}
\end{proof}

\section{Embedding Dimension and Testing Power}\label{app:power_vs_dimension}
In Section \ref{sec:hypothesis_testing}, we tested various dynamic embedding methods on simple simulated networks. In this testing, we demonstrated that unstable methods were unreliable as they often either encoded incorrect structure or were conservative. An assumption that we made in this testing was that the embeddings would be best when the embedding dimension, $d$, matched the rank of the method's noise-free embedding matrix. However, when applying these methods to real-world networks, the rank of the noise-free embedding matrix will not be known. Further, this choice of $d$ may not optimum for a given method. Therefore, we repeat two of the experiments from Section \ref{sec:hypothesis_testing} using a range of embedding dimensions. Tables \ref{tab:static_testing_at_other_dims} and \ref{tab:moving_testing_at_other_dims} show the results for the two temporal smoothing methods, ISE Procrustes and GloDyNE, along with one unfolded method, URLSE. The rank of these systems is two. 

We see that across both experiments, increasing $d$ beyond the rank of the system reduces power. For the temporal smoothing methods, this reduction pushes the methods into being conservative, even when a change is present. While this aids GloDyNE in the static community experiment (in which it fails at $d=2$), we see that in the moving experiments, both temporal smoothing methods fail to detect a change when $d$ is sufficiently over-estimated, due to their conservative nature. This highlights a key strength of our construction, where the stability of unfolded methods holds at \textit{any} $d$. We see that URLSE always maintains stability in the static community system and still manages to encode a significant change at a greatly over-estimated $d$ in the moving community system.

\begin{table}

    \centering
    \begin{tabular}{|p{2cm}|p{1.41cm}|p{1.41cm}|p{1.41cm}|p{1.41cm}|}
    \hline
         \thead{Method} & \thead{$d=2$} & \thead{$d=3$} & \thead{$d=4$} & \thead{$d=5$} \\
         \hline

      \thead{ISE Procrust.} 
         & \cellcolor{blue}
         \cincludegraphics[width=1.55cm]{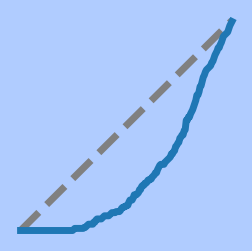}
         & \cellcolor{blue}
         \cincludegraphics[width=1.55cm]{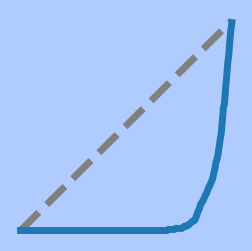}
         & \cellcolor{blue}
         \cincludegraphics[width=1.55cm]{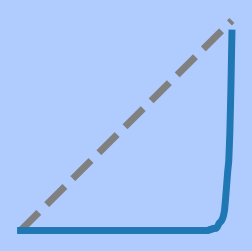}
          & \cellcolor{blue}
         \cincludegraphics[width=1.55cm]{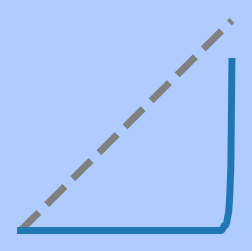}
         \\
         \hline
      \end{tabular}

            \vspace{0.4 cm}

            \centering
    \begin{tabular}{|p{2cm}|p{1.41cm}|p{1.41cm}|p{1.41cm}|p{1.41cm}|}
    \hline
         \thead{Method} & \thead{$d=2$} & \thead{$d=4$} & \thead{$d=6$} & \thead{$d=8$} \\
         \hline

      \thead{GloDyNE} 
         & \cellcolor{red}
         \cincludegraphics[width=1.55cm]{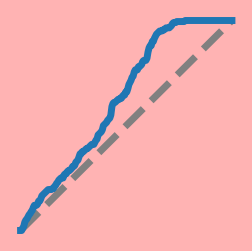}
         & \cellcolor{blue}
         \cincludegraphics[width=1.55cm]{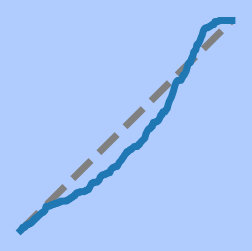}
         & \cellcolor{blue}
         \cincludegraphics[width=1.55cm]{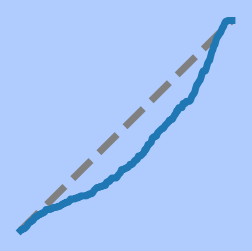}
          & \cellcolor{blue}
         \cincludegraphics[width=1.55cm]{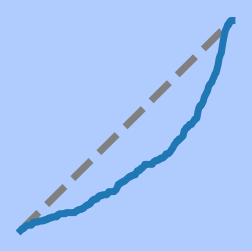}
         \\
         \hline
      \end{tabular}
      
            \vspace{0.4 cm}

    \centering
    \begin{tabular}{|p{2cm}|p{1.41cm}|p{1.41cm}|p{1.41cm}|p{1.41cm}|}
    \hline
         \thead{Method} & \thead{$d=2$} & \thead{$d=4$} & \thead{$d=8$} & \thead{$d=20$} \\
         \hline

      \thead{URLSE} 
         & \cellcolor{green}
         \cincludegraphics[width=1.55cm]{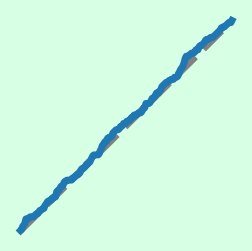}
         & \cellcolor{green}
         \cincludegraphics[width=1.55cm]{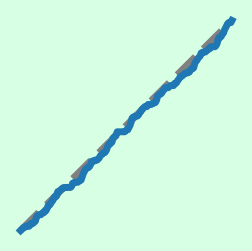}
         & \cellcolor{green}
         \cincludegraphics[width=1.55cm]{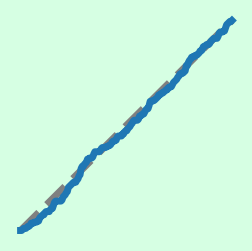}
          & \cellcolor{green}
         \cincludegraphics[width=1.55cm]{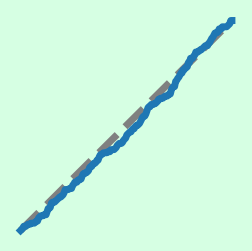}
         \\
         \hline
      \end{tabular}
      
            \vspace{0.4 cm}
      
      \begin{tabular}{|c|c|c|c|}
            \hline
          \textbf{Key} & 
          \cellcolor{green} Uniform & 
          \cellcolor{red} High False Positive Rate &
          \cellcolor{blue} High False Negative Rate 
          \\
          \hline
      \end{tabular}
      \caption{A repeat of the moving graph static community experiment for a range of $d$. At higher dimensions, the temporal smoothing methods, ISE Procrustes and GloDyNE, become more conservative. In contrast, the stability guarantee for URLSE holds for any $d$.}
      \label{tab:static_testing_at_other_dims}
\end{table}

\begin{table}

    \centering
    \begin{tabular}{|p{2cm}|p{1.41cm}|p{1.41cm}|p{1.41cm}|p{1.41cm}|}
    \hline
         \thead{Method} & \thead{$d=2$} & \thead{$d=3$} & \thead{$d=4$} & \thead{$d=5$} \\
         \hline

      \thead{ISE Procrust.} 
         & \cellcolor{green}
         \cincludegraphics[width=1.55cm]{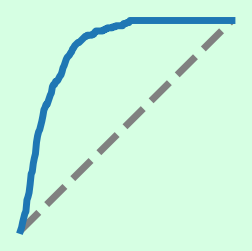}
         & \cellcolor{red}
         \cincludegraphics[width=1.55cm]{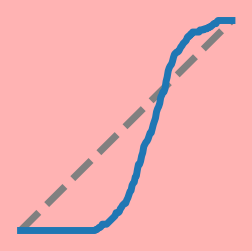}
         & \cellcolor{red}
         \cincludegraphics[width=1.55cm]{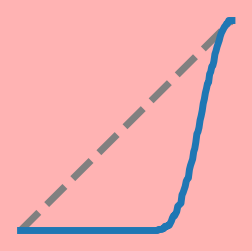}
          & \cellcolor{red}
         \cincludegraphics[width=1.55cm]{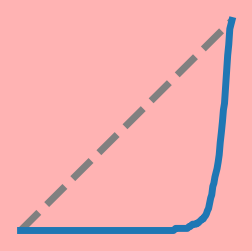}
         \\
         \hline
      \end{tabular}

            \vspace{0.4 cm}

            \centering
    \begin{tabular}{|p{2cm}|p{1.41cm}|p{1.41cm}|p{1.41cm}|p{1.41cm}|}
    \hline
         \thead{Method} & \thead{$d=2$} & \thead{$d=4$} & \thead{$d=6$} & \thead{$d=8$} \\
         \hline

      \thead{GloDyNE} 
         & \cellcolor{green}
         \cincludegraphics[width=1.55cm]{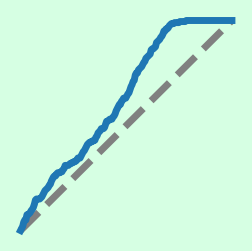}
         & \cellcolor{green}
         \cincludegraphics[width=1.55cm]{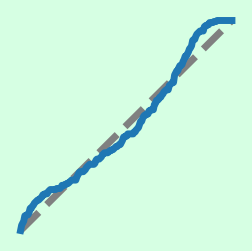}
         & \cellcolor{green}
         \cincludegraphics[width=1.55cm]{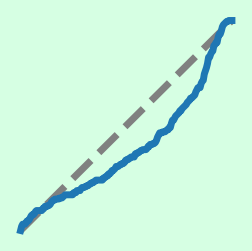}
          & \cellcolor{red}
         \cincludegraphics[width=1.55cm]{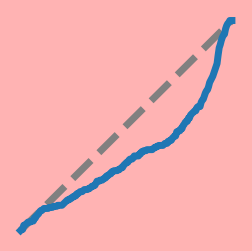}
         \\
         \hline
      \end{tabular}
      
            \vspace{0.4 cm}

                        \centering
    \begin{tabular}{|p{2cm}|p{1.41cm}|p{1.41cm}|p{1.41cm}|p{1.41cm}|}
    \hline
         \thead{Method} & \thead{$d=2$} & \thead{$d=4$} & \thead{$d=8$} & \thead{$d=20$} \\
         \hline

      \thead{URLSE} 
         & \cellcolor{green}
         \cincludegraphics[width=1.55cm]{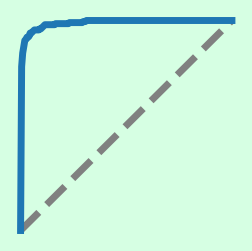}
         & \cellcolor{green}
         \cincludegraphics[width=1.55cm]{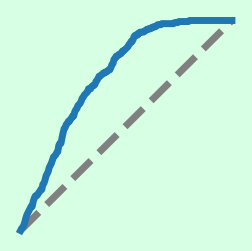}
         & \cellcolor{green}
         \cincludegraphics[width=1.55cm]{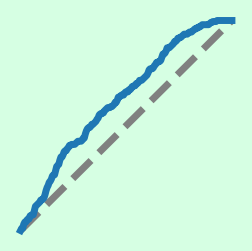}
          & \cellcolor{green}
         \cincludegraphics[width=1.55cm]{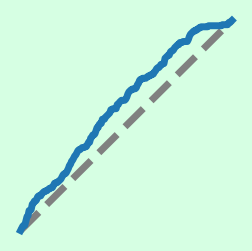}
         \\
         \hline
      \end{tabular}
      
            \vspace{0.4 cm}
      
      \begin{tabular}{|c|c|c|}
            \hline
          \textbf{Key} & 
          \cellcolor{green} High Positive Rate & 
          \cellcolor{red} High False Negative Rate 
          \\
          \hline
      \end{tabular}
      \caption{Testing at other embedding dimensions on the moving graph moving community system. Similar to Table \ref{tab:static_testing_at_other_dims}, the alignment-based embedding methods become more conservative at higher dimensions to the point where changes cannot be detected when $d$ is sufficiently over-estimated. In contrast, unfolded methods have more power, allowing changes to be detected even at a significantly over-estimated $d$.}
      \label{tab:moving_testing_at_other_dims}
\end{table}

While unfolded methods will be stable at any $d$, it is clear that there exists an optimum $d$ which produces an embedding with the most power. However, as our construction does not assume that $\mathbf{P}^{(t)}$ is low-rank, the notion of what the optimum $d$ is becomes less clear. Consider a DSBM system with form,
\begin{equation*}
    \mathbf{B}^{(1)} = \begin{bmatrix} 
    p_1 & 0.2 & \dots & 0.2 \\
    0.2 & p_2 & \ddots & \vdots \\
    \vdots & \ddots & \ddots & 0.2 \\
    0.2 & \dots & 0.2 & p_K
    \end{bmatrix},
    \mathbf{B}^{(2)} = \begin{bmatrix} 
    p_1 & 0.2 & \dots & 0.2 \\
    0.2 & p_2 + 0.03 & \ddots & \vdots \\
    \vdots & \ddots & \ddots & 0.2 \\
    0.2 & \dots & 0.2 & p_K 
    \end{bmatrix},
\end{equation*}
where $p_1, \dotsm p_K$ are the within-community connection probabilities. This is an example of a $K$-dimensional random network which has a planted temporal change in the second community. Using DSBMs of this form allows us to investigate how the optimum $d$ varies with systems of increasing rank. Figure \ref{fig:power_vs_dimension_changing_n} displays the testing power of two temporally stable dynamic embedding methods, UASE and OMNI, on two $8$-dimensional systems with
\begin{itemize}
    \item \textbf{system (a)}: $p_1, \dots, p_K = 0.5, \dots, 0.5$,
    \item \textbf{system (b)}: $p_1, \dots, p_K = 0.3, \dots, 0.9$.
\end{itemize}
While the choice of these two systems is quite arbitrary, the relationship between the embedding power and $d$ for the two systems is quite different, particularly in the region of $d<K$. In addition to this, neither system had the rank of the respective noise-free embedding matrices as the optimum $d$. In fact, the optimum $d$ in both systems was below $K$. Therefore, it is generally unclear which $d$ will be best for a given network. We leave the precise explanation of this behaviour as an open question for future research. 

\begin{figure}[p]
\centering
\begin{subfigure}{0.9\textwidth}
\includegraphics[width=\linewidth]{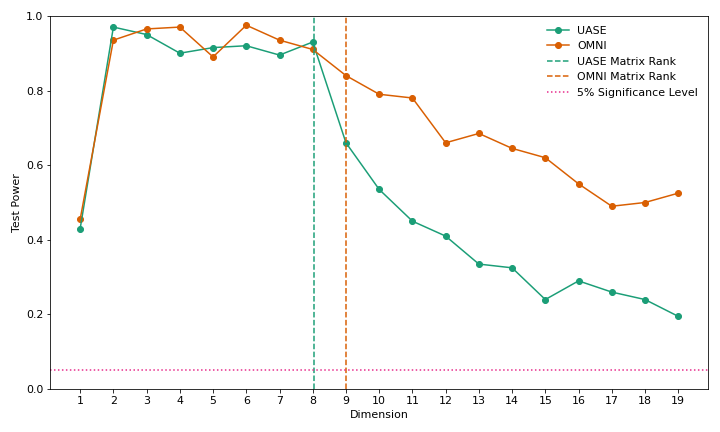}
\caption{Using a DSBM where the within-community edge probabilities are identical for each stable cluster in both time points.}
\label{sfig:power_vs_dimension_system1}
\end{subfigure}

\begin{subfigure}{0.9\textwidth}
\includegraphics[width=\linewidth]{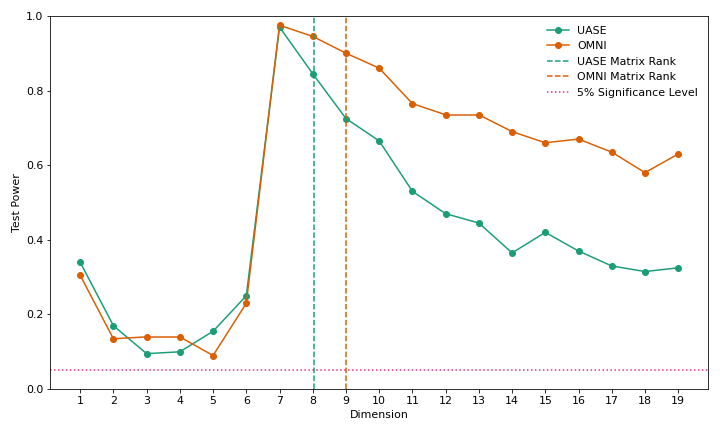}
\caption{Using a DSBM where within-community edge probabilities are different for each stable cluster in both time points.}
\label{sfig:power_vs_dimension_system2}
\end{subfigure}
\caption{Testing for change by a moving community in two 8-community DSBMs. In the region below the rank of the embedding matrices, the behaviour of this curve is wildly different for the two systems. In both cases the optimum $d$ was below the rank of the noise-free embedding matrices. While at $d<K$, UASE and OMNI have similar testing power, the testing power from OMNI drops off more gradually than UASE.}
\label{fig:power_vs_dimension_changing_n}
\end{figure}

\FloatBarrier

\section{Supplimentary Figures}

\FloatBarrier
\begin{figure}
    \centering
    \includegraphics[width=.7\linewidth]{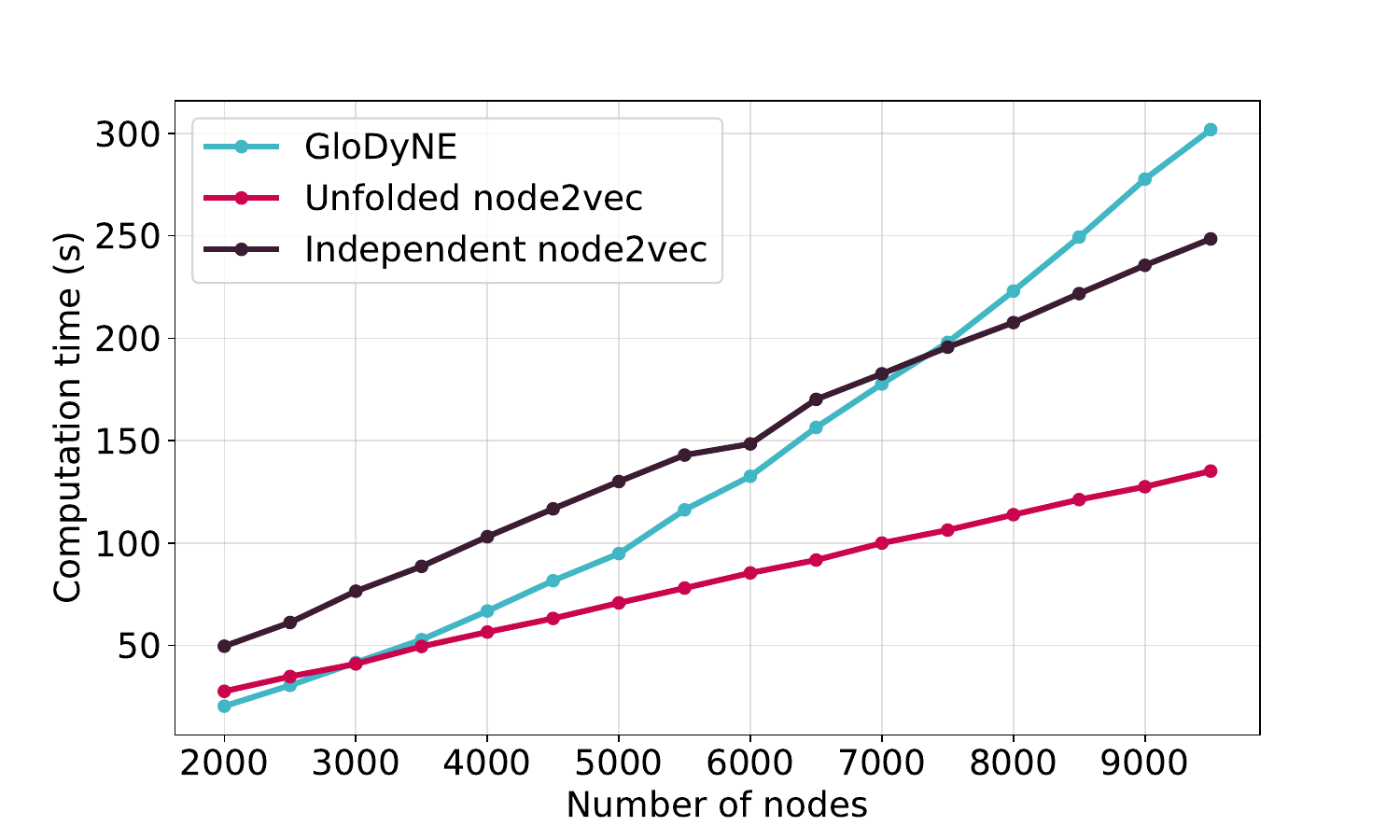}
    \caption{Computation times for three different dynamic skip-gram embedding methods on the power-distributed moving system for an increasing number of nodes. While GloDyNE is initially the fastest for low $n$, its computation time increases significantly for larger $n$. As $n$ increases, unfolded node2vec is the fastest here.}
    \label{fig:computation_times}
\end{figure}

\FloatBarrier

\begin{figure}[p]
    \centering
    \includegraphics[width=\linewidth]{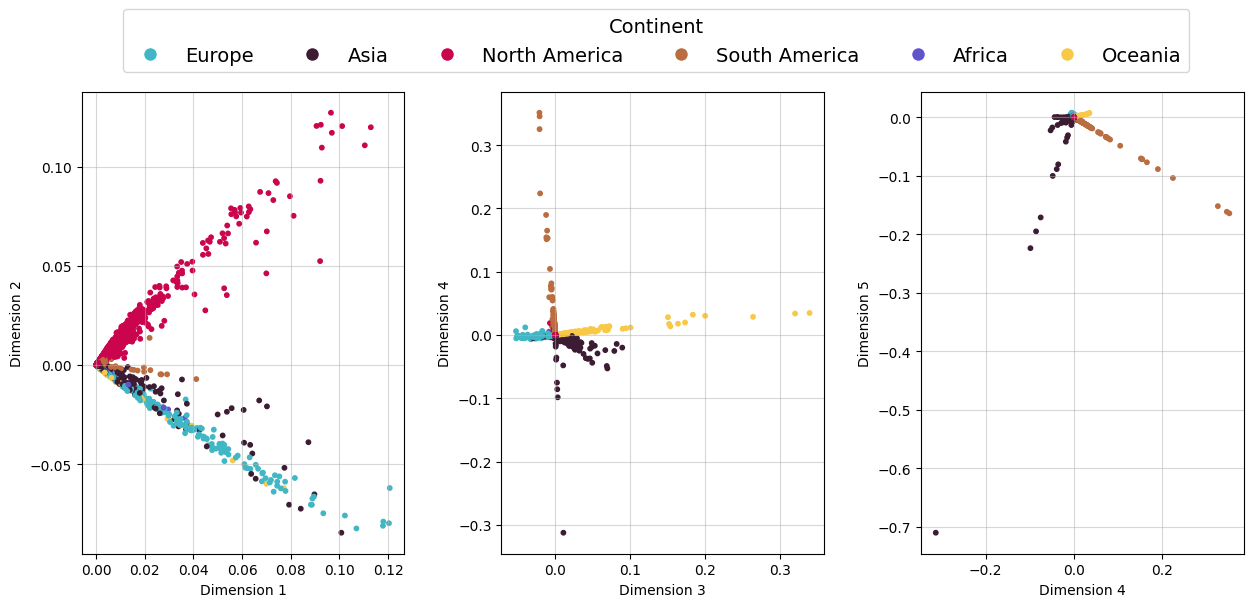}
    \caption{A five-dimensional time-invariant anchor embedding of the flight network using URLSE. The embedding features rays from the origin, where continents lie on rays of different angles. The further from the origin a node is, the higher its degree in the network.}
    \label{fig:embedding_vis}
\end{figure}

\begin{figure}[p]
    \centering
    \includegraphics[width=\linewidth]{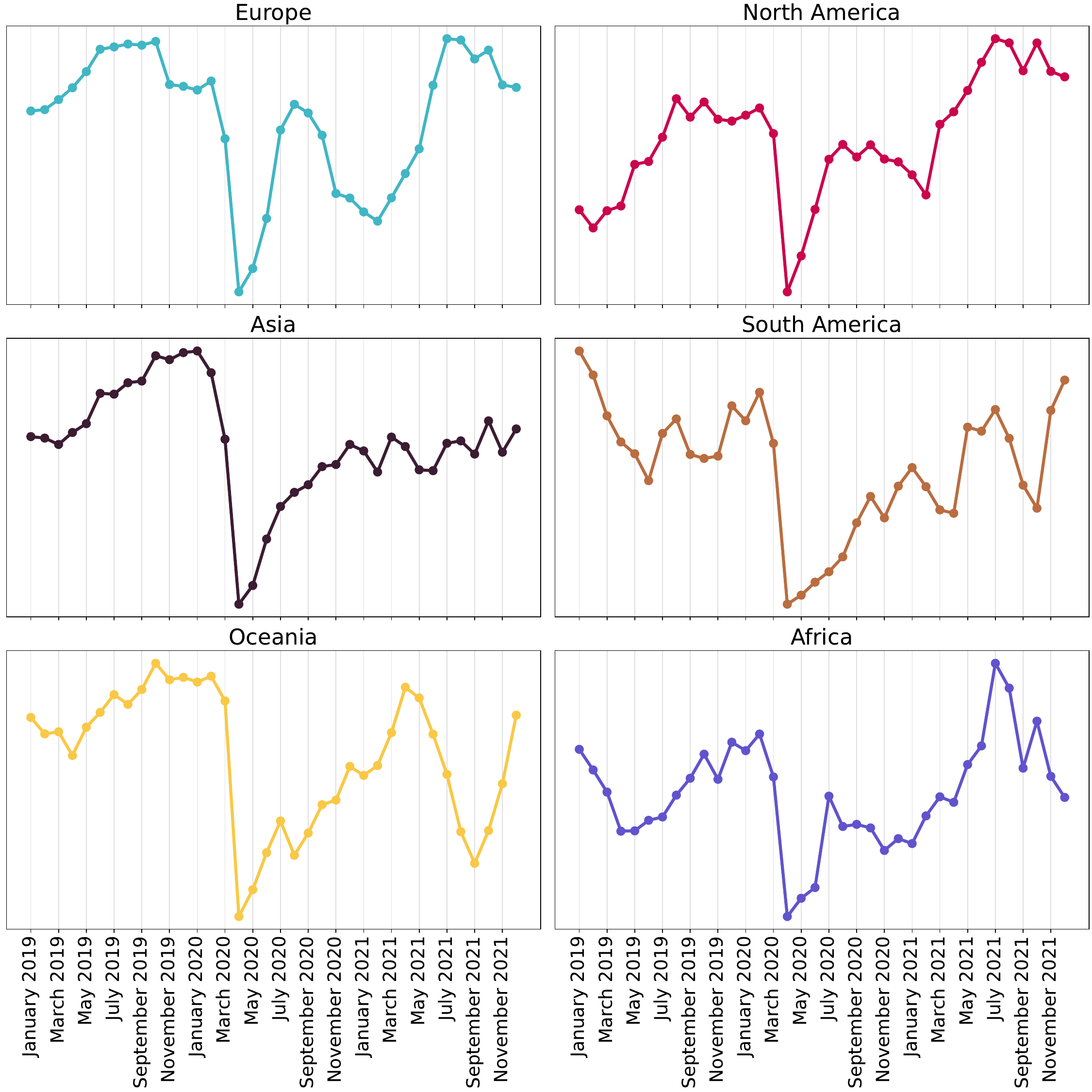}
    \caption{The temporal structure of the flight network for the different continents encoded by a one-dimensional URLSE. Each continent displays a significant change around the start of the COVID-19 pandemic. From this initial change, the continents display various behaviour, for example, Europe and Oceania had two clear waves of disruption, with the second Oceania wave much later than the second European wave.}
    \label{fig:continent_1d_plots}
\end{figure}

\begin{figure}[htbp]
  \centering
  \begin{subfigure}[b]{0.45\textwidth}
    \includegraphics[width=\textwidth]{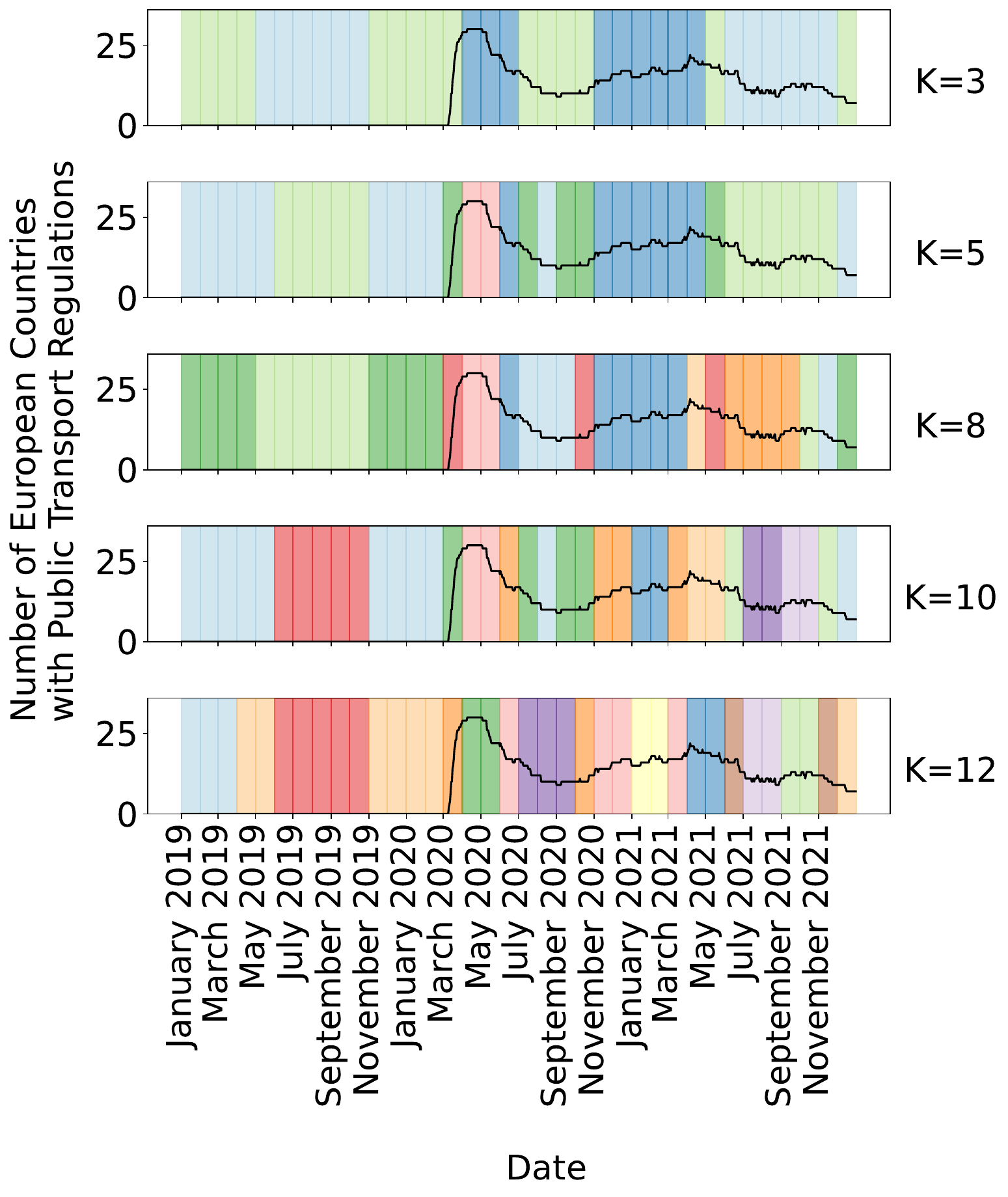}
    \caption{URLSE d=50.}
    \label{fig:various_clustering_urlse_d=50}
  \end{subfigure}
  \hfill
  \begin{subfigure}[b]{0.45\textwidth}
    \includegraphics[width=\textwidth]{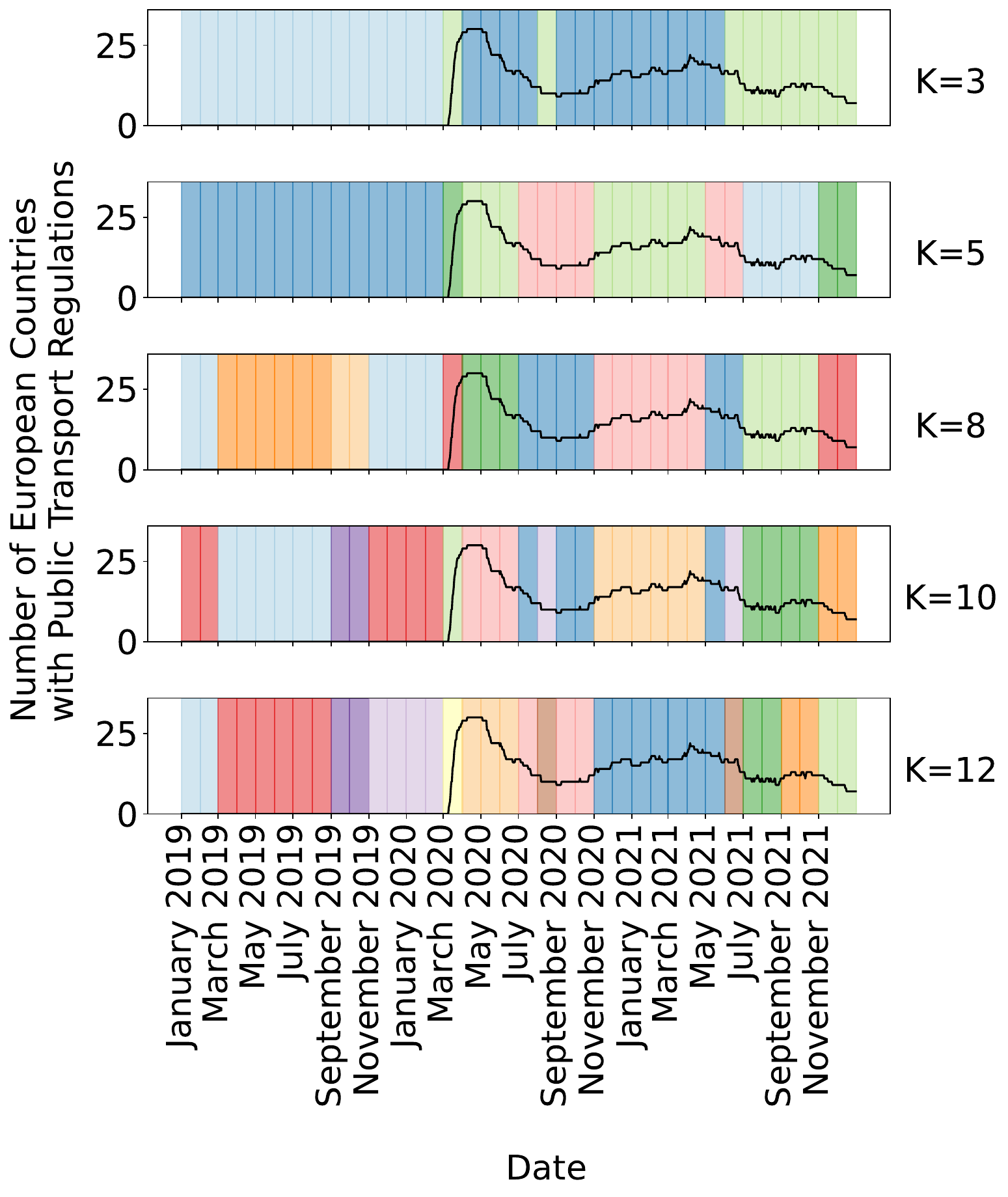}
    \caption{UASE d=50.}
    \label{fig:various_clustering_uase_d=50}
  \end{subfigure}
  \vskip\baselineskip
  \begin{subfigure}[b]{0.45\textwidth}
    \includegraphics[width=\textwidth]{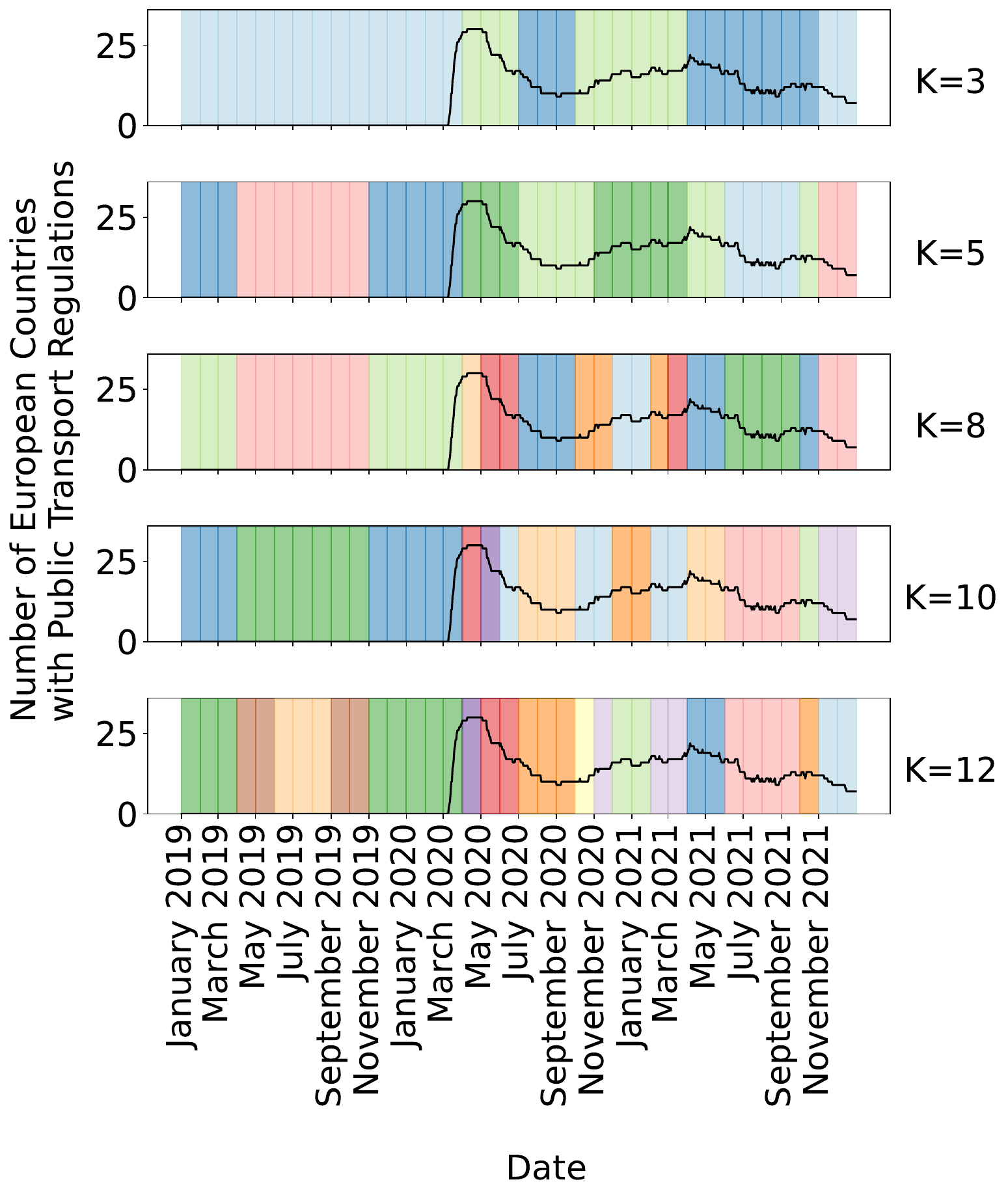}
    \caption{URLSE d=100.}
    \label{fig:various_clustering_urlse_d=100}
  \end{subfigure}
  \hfill
  \begin{subfigure}[b]{0.45\textwidth}
    \includegraphics[width=\textwidth]{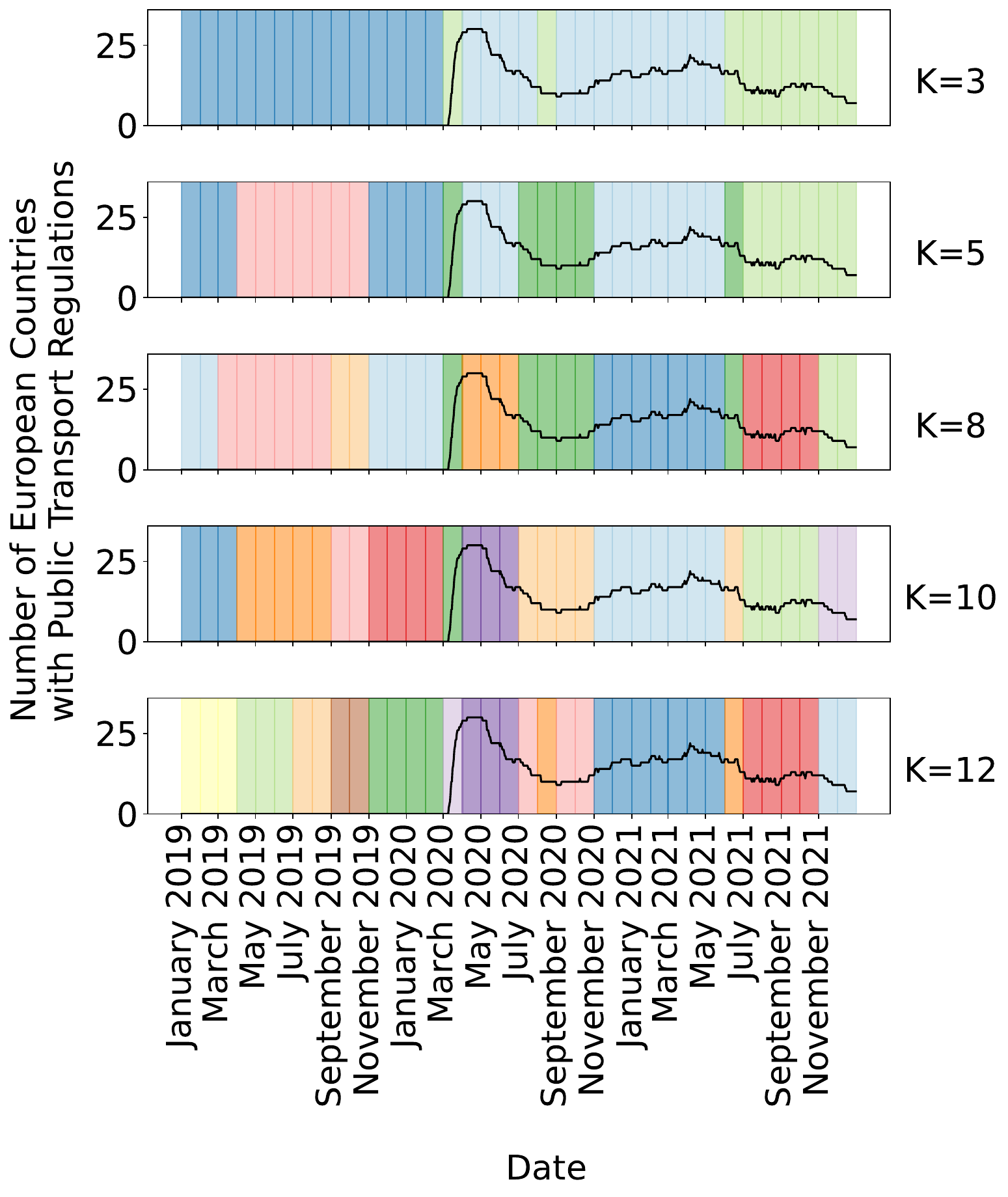}
    \caption{UASE d=100.}
    \label{fig:various_clustering_uase_d=100}
  \end{subfigure}
  \caption{Temporal clustering on flight network URLSEs and UASEs at 50 and 100 dimensions using different numbers of clusters, $K$. Each month is assigned to a cluster, which is indicated by a coloured bar. URLSE focuses more on the periodicity of the network; in each case it encodes periodic behaviour before the pandemic and often returns to normality after the pandemic. In contrast, UASE focuses more on the disruption of the pandemic; it consistently allocates single clusters to the peak and trough of each wave of the pandemic.}
  \label{fig:various_clustering_1}
\end{figure}

\FloatBarrier
\bibliography{references}

\end{document}